\newtheorem{thm}{Theorem}[section]
\newtheorem{lemma}[thm]{Lemma}
\newtheorem{cor}[thm]{Corollary}
\theoremstyle{definition}
\newtheorem{defn}[thm]{Definition}
\newcommand{\dvol}{\textnormal{dvol}}
\newcommand{\id}{\mathrm{id}}
\newcommand{\Af}{\mathscr{A}}
\newcommand{\Bf}{\mathscr{B}}
\newcommand{\Cf}{\mathscr{C}}
\newcommand{\If}{\mathscr{I}}
\newcommand{\Uf}{\mathscr{U}}
\newcommand{\Ac}{\mathcal{A}}
\newcommand{\Bc}{\mathcal{B}}
\newcommand{\Cc}{\mathcal{C}}
\newcommand{\dd}{\mathrm{d}}
\newcommand{\ee}{\mathrm{e}}
\newcommand{\ii}{\mathrm{i}}
\newcommand{\dt}{\dd t}
\newcommand{\ds}{\dd s}
\newcommand{\1}{\mathbb{1}}
\newcommand{\CC}{\mathbb{C}}	
\newcommand{\NN}{\mathbb{N}}	
\newcommand{\RR}{\mathbb{R}}
\DeclareMathOperator{\ch}{ch}
\DeclareMathOperator{\Eff}{Eff}
\DeclareMathOperator{\supp}{supp}
\DeclareMathOperator{\Pb}{Prob}
\DeclareMathOperator{\Lin}{Lin}
\newcommand{\Prob}[2]{\Pb(#1;#2)}
	\title{Lectures on measurement in quantum field theory \\ {\large EMS-IAMP Spring School -- Symmetry and Measurement
			in Quantum Field Theory}}% \break York, 7--11 April 2025}} 
	\author{Christopher J Fewster\thanks{\tt chris.fewster@york.ac.uk}\\[0.1cm] 
	{\small Department of Mathematics, University of York, Heslington, York YO10 5DD, United Kingdom.}}
	\date{April 7--11, 2025}
\begin{document}
		\maketitle
	
	\begin{abstract}
		These lectures present a brief introduction to measurement theory for QFT in possibly curved spacetimes introduced by the author and R.~Verch [Comm.\ Math.\ Phys.\ {\bf 378} (2020) 851--889]. Topics include: a brief introduction to algebraic QFT, measurement schemes in QFT, state updates, multiple measurements and the resolution of Sorkin's `impossible measurement' problem. Examples using suitable theories based on Green hyperbolic operators are given, and the interpretational significance of the framework is briefly considered.
		The basic style is to give details relating to QFT while taking for granted various facts from the theory of globally hyperbolic spacetimes. 
	\end{abstract}

%\clearpage

\section*{Introduction and scope}

Undergraduate courses in quantum mechanics typically devote considerable attention to the rules of measurement, however this is typically completely absent from courses and texts on quantum field theory (QFT). 
Indeed, until recently there was relatively little attention to the issue at all, and perhaps it is not an exaggeration to say that the available literature was more focussed on potential pathologies and paradoxes than on solutions. The historical reasons for the lack of a good measurement theory for QFT are traced in~\cite{FraserPapageorgiou:2023}, while some examples of the problems encountered are described  in~\cite{EncycMP_Measurement_in_QFT_FewsterVerch2025}.  

The aim of these lectures is to describe a recent proposal developed by the author and Rainer Verch~\cite{FewVer_QFLM:2018}, now sometimes called the Fewster--Verch or FV measurement framework, that 
provides an operational approach to measurement in QFT which is demonstrably causal and covariant, and applies in both flat and curved spacetimes. The aim is to give many of the technical details of~\cite{FewVer_QFLM:2018}, and some of the papers that followed it, in a compact form. As the material was intended for four one-hour lectures, something had to give, and details concerning causal structure of Lorentzian spacetimes are mostly suppressed. The reader who is less versed in such things is free to focus on the example of Minkowski spacetime where the statements are more or less obvious. (But beware: this is \emph{not} to say that any statement that is obviously true in Minkowski spacetime will hold in general Lorentzian spacetimes!)

\paragraph{Acknowledgments} The material in Sections 1--5 was given as lectures at the EMS--IAMP Spring School on Symmetry and Measurement in Quantum Field Theory, at the University of York, April 7--11, 2025. I thank my fellow organisers Daan Janssen and Kasia Rejzner, and the $75+$ participants of the Spring School for their enthusiasm, questions and interest. The author's work is partly supported by EPSRC Grant EP/Y000099/1, which also supported the Spring School. Additional support for the Spring School is gratefully acknowledged from the European Mathematical Society, the International Association of Mathematical Physics, and the \href{
	https://www.cost.eu/actions/CA23115/}{COST Action CA23115}: Relativistic Quantum
Information, funded by COST (European Cooperation in Science and
Technology).

\section{Algebraic QFT by example}	
	
The measurement theory we develop will be expressed in terms of {\bf algebraic quantum field theory} (AQFT). We give a very brief introduction to AQFT, motivated by the examples of the quantum mechanical harmonic oscillator and free Klein--Gordon field.
	
\subsection{Algebraic approach to the harmonic oscillator}
	 
The harmonic oscillator equation
\[ 
	\ddot{q}(t) + \omega^2 q(t) = 0, \qquad q(0)=q,
\]
where $q$ is the standard position operator on $L^2(\RR)$, has the solution (Heisenberg evolution)
\[
q(t) = e^{\ii Ht} q e^{-\ii Ht} =q \cos\omega t + p \frac{\sin\omega t}{\omega}, \qquad\textnormal{where}\quad
H =  \frac{1}{2}p^2 + \frac{1}{2}\omega^2 q^2, \qquad [q,p]=\ii \1,
\]
with $q$ and $p=-\ii \dd/\dd q$ acting as usual on $L^2(\RR)$ -- we suppress operator domains for now.

For any smooth compactly supported complex-valued function $f\in C_0^\infty(\RR)$, define
\[
Q(f) = \int \dt f(t) q(t).
\]
Then it is an \emph{exercise} to check that the following properties hold (e.g., as identities on the Schwartz test functions, which are dense in $L^2(\RR)$):
\begin{enumerate}[H1]
\item the map $f\mapsto Q(f)$ is complex linear [i.e., $Q(\lambda f + \mu g)=\lambda Q(f) + \mu Q(g)$ for $\lambda,\mu\in C_0^\infty(\RR)$, $\lambda,\mu\in\CC$]
\item $Q(f)^*=Q(\overline{f})$
\item $Q(\ddot{f}+\omega^2 f)=0$
\item $[Q(f),Q(g)]=\ii E(f,g)\1$,
\end{enumerate}
where 
\[
E(f,g)= \int \dt\,\ds E(t,s)f(t)g(s), \qquad E(t,s)=-\frac{\sin\omega(t-s)}{\omega}.
\]
It is sometimes convenient to write $Eg$ for the function $(Eg)(t)=\int\ds E(t,s)g(s)$, so that
$E$ becomes a linear operator from $C_0^\infty(\RR)$ to $C^\infty(\RR)$.
Now it is clear that
\[
E(t,s)=E^-(t,s)-E^+(t,s), \qquad\textnormal{where}\quad 
E^\pm(t,s)= \pm \vartheta(\pm(t-s))\frac{\sin\omega(t-s)}{\omega},
\]
where $\vartheta(t)$ is the Heaviside function. The functions $E^\pm$ play an important role in the classical ODE theory of the harmonic oscillator. For $g\in C_0^\infty(\RR)$, one may easily check (\emph{exercise}) that
\[
q(t) = \int \ds\, E^\pm(t,s)g(s) = \int^t_{\mp\infty} \ds\,\frac{\sin\omega(t-s)}{\omega}g(s)
\]
is the unique solution to $\ddot{q}+\omega^2 q = g$ that vanishes to the far past ($+$) or far future ($-$).
That is, $E^+$ is the {\bf retarded Green function} and $E^-$ is the {\bf advanced Green function} for the harmonic oscillator. 

There are good reasons why the difference of the advanced and retarded Green operators appears in the commutator $[Q(f),Q(g)]$, which have to do with {\bf Peierls brackets} -- a grown-up version of 
{\bf Poisson brackets}. 

We can consider the algebra of all polynomial expressions in operators of the form $Q(f)$ and the unit $\1$, using relations 1--4 as rules of simplification. It is an important fact [a version of the Stone--von Neumann theorem] that, modulo modest technical conditions, the \emph{only} irreducible representation of this algebra as Hilbert space operators is the standard quantum mechanical harmonic oscillator (up to unitary equivalence). In this way, almost everything interesting about the system can be bundled up into the algebra of the $Q(f)$'s, subject to the relations H1--4.

\subsection{The free scalar field}

We generalise the algebraic method to a field theory on possibly curved spacetime. Here, an $n$-dimensional {\bf spacetime} is a smooth $n$-manifold $M$ with  Lorentzian metric $g$ of signature ${+}{-}\cdots{-}$ which is oriented and time-oriented. For any set $S\subset M$
\[
J^\pm(S) = \{p\in M: \text{$p$ is reached from $S$ by a piecewise-smooth future/past-directed curve}\},
\]
with the convention $S\subset J^\pm(S)$; we also denote $J(S)=J^+(S)\cup J^-(S)$. The {\bf causal hull} $\ch(S)$ of $S$ is $J^+(S)\cap J^-(S)$.
\begin{figure}
\begin{center}
	\begin{tikzpicture}[scale=0.9]
		\draw[fill=red!20,red!20] (-1,0)--(1,0)--++(2,2)--++(-6,0)--cycle;
		\draw[fill=blue!20] (-1,0).. controls (0,-0.5).. (1,0).. controls (0,0.5).. (-1,0); 
		\node at (0,0){$S$};		
		\node at (0,1){$J^+(S)$};
		\begin{scope}[xshift = 4cm]
		\draw[fill=red!20,red!20] (-1,0)--(1,0)--++(2,-2)--++(-6,0)--cycle;
		\draw[fill=blue!20] (-1,0).. controls (0,-0.5).. (1,0).. controls (0,0.5).. (-1,0); 
		\node at (0,0){$S$};
		\node at (0,-1){$J^-(S)$};
		\end{scope}
		\begin{scope}[xshift = 12cm]
			\draw[fill=red!20,red!20] (-1,0)--(1,0)--++(2,2)--++(-6,0)--cycle;
			\draw[fill=red!20,red!20] (-1,0)--(1,0)--++(2,-2)--++(-6,0)--cycle;
			\draw[fill=blue!20] (-1,0).. controls (0,-0.5).. (1,0).. controls (0,0.5).. (-1,0); 
			\node at (0,0){$S$};
			\node at (0,1){$J(S)$};
		\end{scope}
	\end{tikzpicture}
\end{center}
\begin{center}
	\includegraphics{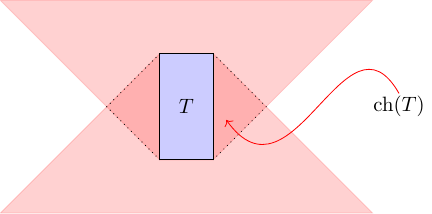}
%	\begin{tikzpicture}[scale=0.9] 
%		\draw[fill=red!10,red!60,opacity=0.3] (-0.5,-1)--++(1,0)--++(3,3)--++(-7,0)--cycle;
%		\draw[fill=red!10,red!60,opacity=0.3] (-0.5,1)--++(1,0)--++(3,-3)--++(-7,0)--cycle;
%		\draw[fill=blue!20] (-0.5,-1)--++(1,0)--++(0,2)--++(-1,0)--cycle;
%		\node at (0,0) {$T$};
%		\draw[dotted](0.5,-1)--++(1,1)--++(-1,1);
%		\draw[dotted](-0.5,-1)--++(-1,1)--++(1,1);
%		\node at (4,0){$\ch(T)$};
%		\draw[->,red] (4,0.25)..controls(3,2) and (2,-2) .. (0.75,-0.25);
%	\end{tikzpicture}
\end{center}
\caption{Illustration of $J^\pm(S)$, $J(S)$ and an example of a causal hull. Light rays are at $45^\circ$. The set $S$ is causally convex, i.e., equal to its causal hull.}
\end{figure}
A spacetime is {\bf globally hyperbolic} if it contains no closed causal curves and every compact set has a compact causal hull.
Examples include Minkowski, de Sitter, and Schwarzschild. Anti de Sitter is not globally hyperbolic. 

{\noindent\bf Fact:} On any globally hyperbolic spacetime $M$,
the Klein--Gordon operator $P=\Box+m^2$ ($m\ge 0$) has unique 
Green operators  
$E^\pm: C_0^\infty(M)\to C^\infty(M)$
such that
\begin{enumerate}[G1]
	\item $E^\pm Pf=f$
	\item $PE^\pm f=f$ 
	\item $\supp E^\pm f\subset J^\pm(\supp f)$ 
\end{enumerate}
for all $f\in C_0^\infty(M)$. As before, $E^+$ is the retarded Green operator, while $E^-$ is the advanced Green operator.
 We define the advanced-minus-retarded operator $E=E^--E^+$ and a bilinear pairing 
\[
E(f,h) = \int f E h\,\dvol.
\]

We now quantise the theory by analogy with the harmonic oscillator. Define a unital $*$-algebra $\Af(M)$ with generators $\Phi(f)$ labelled by test functions $f\in C_0^\infty(M)$, and obeying relations
\begin{enumerate}[Q1]
	\item $f\mapsto \Phi(f)$ is complex-linear
	\item $\Phi(f)^*=\Phi(\overline{f})$
	\item $\Phi(Pf)=0$
	\item $[\Phi(f),\Phi(h)]=\ii E(f,h)\1$
\end{enumerate}
for all $f,h\in C_0^\infty(M)$. The interpretation is that $\Phi(f)$ is the integral of the `quantum field' against the test function $f$. It is a {\bf fact} that $\Af(M)$ is nontrivial. If $A\in\Af(M)$ obeys $A\in\Af(M;O)$ we say $A$ is {\bf localisable in $O$.} Clearly it is possible to be localised in very many regions.

\paragraph{Exercises} 
Show that $E$ obeys  $E(Pf,h)=E(f,Ph)=0$ and is antisymmetric (hint - use uniqueness of $E^\pm$). This is necessary for the consistency of relations Q3 and Q4. Show also that $E(f,h)=0$ if $\supp f$ and $\supp h$ are causally disjoint. 
What property of $E$ is a necessary consequence of Q2 and Q4? Prove it directly in the case of the Klein--Gordon equation (hint - start by
showing $E^\pm \overline{f}=\overline{E^\pm f}$).

\paragraph{Local algebras} Any causally convex [i.e., equal to its causal hull] open subset will be called a {\bf region}. Any region is globally hyperbolic in its own right.
To each region $O\subset M$, we assign the subalgebra
\[
\Af(M;O)\subset \Af(M)
\]
generated by $\Phi(f)$ with $f\in C_0^\infty(M)$.
Evidently $\Af(M;M)=\Af(M)$.

\begin{thm}\label{thm:timeslice} 
	For regions $O$, $O_1$ and $O_2$, one has
	\begin{enumerate}
		\item  if $O_1\subset O_2$ then $\Af(M;O_1)\subset\Af(M;O_2)$
		{\bf (isotony)}
		\item if $O_1$ is causally disjoint from $O_2$ (i.e., $O_1\cap J(O_2)=\emptyset$) then $\Af(M;O_1)$ commutes with $\Af(M;O_2)$ {\bf (Einstein causality)}
		\item if $O_1\subset O_2$ and $O_1$ contains a Cauchy surface of $M$ then $\Af(M;O_1)=\Af(M;O_2)$ {\bf (timeslice)}.
	\end{enumerate}
\end{thm}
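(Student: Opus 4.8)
The plan is to obtain the first two parts directly from the relations Q1--Q4 and the support properties of $E$, treating the timeslice property as the only substantive point. \emph{Isotony and Einstein causality.} If $O_1\subset O_2$, then every $f\in C_0^\infty(O_1)$ has $\supp f\subset O_2$, so each generator $\Phi(f)$ of $\Af(M;O_1)$ already lies in $\Af(M;O_2)$; as these generate $\Af(M;O_1)$, part~1 follows. For part~2, take generators $\Phi(f)$ with $\supp f\subset O_1$ and $\Phi(h)$ with $\supp h\subset O_2$, where $O_1$ is causally disjoint from $O_2$; then $\supp f$ and $\supp h$ are causally disjoint, so $E(f,h)=0$ by the support exercise for $E$, whence $[\Phi(f),\Phi(h)]=0$ by Q4. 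To pass from commuting generators to commuting subalgebras I would use that, for fixed $A$, the set $\{B:[A,B]=0\}$ is a unital subalgebra: taking $A=\Phi(f)$ shows every generator of $\Af(M;O_1)$ commutes with all of $\Af(M;O_2)$, and then letting $A$ range over $\Af(M;O_2)$ shows $\Af(M;O_1)$ commutes with $\Af(M;O_2)$.

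\emph{Timeslice.} I would in fact prove $\Af(M;O_1)=\Af(M)$; combined with part~1 and $\Af(M;M)=\Af(M)$ this squeezes $\Af(M;O_1)\subset\Af(M;O_2)\subset\Af(M)=\Af(M;O_1)$. Since $\Af(M)$ is generated by the $\Phi(h)$ with $h\in C_0^\infty(M)$, it suffices to show each such $\Phi(h)\in\Af(M;O_1)$, and for that it is enough to produce a splitting
\[
h = h' + Pg,\qquad h'\in C_0^\infty(O_1),\quad g\in C_0^\infty(M),
\]
since Q1 and Q3 then give $\Phi(h)=\Phi(h')+\Phi(Pg)=\Phi(h')\in\Af(M;O_1)$.

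To construct the splitting, let $\Sigma\subset O_1$ be the given Cauchy surface of $M$. Using the causal structure of a globally hyperbolic spacetime --- which the lectures take for granted --- I would choose Cauchy surfaces $\Sigma^-,\Sigma^+$ of $M$, lying respectively to the past and the future of $\Sigma$, whose sandwich $N:=J^+(\Sigma^-)\cap J^-(\Sigma^+)$ is contained in $O_1$, together with $\chi\in C^\infty(M)$ equal to $0$ on $J^-(\Sigma^-)$ and to $-1$ on $J^+(\Sigma^+)$, so that $\supp\dd\chi\subset N$, $\supp\chi\subset J^+(\Sigma^-)$ and $\supp(1+\chi)\subset J^-(\Sigma^+)$. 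Put $\psi:=Eh$; then $P\psi=0$ by G2 and $\supp\psi\subset J(\supp h)$ by G3. Set
\[
h':=P(\chi\psi),\qquad g:=(1+\chi)E^+h-\chi E^-h.
\]
Because $\chi$ is locally constant off $\supp\dd\chi$ and $P\psi=0$, one has $\supp h'\subset\supp\dd\chi\cap\supp\psi\subset N\cap J(\supp h)$; using $PE^+h=h$ (G2), a short computation gives $h-h'=P\bigl((1+\chi)E^+h-\chi E^-h\bigr)=Pg$; and, using G3, both $\supp h'$ and $\supp g$ are contained in $\bigl(J^-(\Sigma^+)\cap J^+(\supp h)\bigr)\cup\bigl(J^+(\Sigma^-)\cap J^-(\supp h)\bigr)$. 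By the standard fact that $J^+(K)\cap J^-(S)$ is compact whenever $K$ is compact and $S$ is a Cauchy surface (and likewise with $+$ and $-$ interchanged), this union is compact, so $h'$ and $g$ are compactly supported; since moreover $\supp h'\subset N\subset O_1$, we have $h'\in C_0^\infty(O_1)$ and $g\in C_0^\infty(M)$, completing the splitting.

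The main obstacle is exactly the causal-geometric input of the previous paragraph: that a causally convex open neighbourhood of a Cauchy surface of $M$ contains a sandwich $N$ between two Cauchy surfaces, that a cutoff $\chi$ of the stated form exists, and that $J^+(K)\cap J^-(S)$ is compact for $K$ compact and $S$ a Cauchy surface. These are standard consequences of global hyperbolicity --- the Bernal--S\'anchez foliation results and the support theory of the Green operators $E^\pm$ --- and, once granted, reduce the proof to routine manipulation of Q1--Q4 and G1--G3.
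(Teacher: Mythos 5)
Your proof is correct and follows essentially the same route as the paper: parts 1 and 2 are handled exactly as the paper indicates, and your splitting $h=P(\chi Eh)+Pg$ with $g=(1+\chi)E^+h-\chi E^-h$ is precisely the paper's identity $[P,\chi]Ef=f+Ph$ after replacing your cutoff $\chi$ by $\chi_{\text{paper}}-1$. The causal-geometric inputs you invoke (the sandwich $N$ between Cauchy surfaces inside $O_1$ and the compactness of $J^\pm(K)\cap J^\mp(\Sigma)$) are the same facts the paper relegates to its footnote.
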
	
\begin{proof}
	1 -- obvious; 2 -- follows from the exercises about $E$. We prove 3 in the case $O_2=M$, writing $O=O_1$. See Fig.~\ref{fig:timeslice} for an illustration. Let $\Sigma^\pm$ be $M$-Cauchy surfaces contained in $O$ with $\Sigma^\pm \subset M\setminus J^\mp(\Sigma^\mp)$.  Choose $\chi\in C^\infty(M)$ 
	so that 
	\[
	\chi(x) =\begin{cases} 1 & x\in J^-(\Sigma^-)\\ 0 & x\in J^-(\Sigma^+).
	\end{cases}
	\] 
	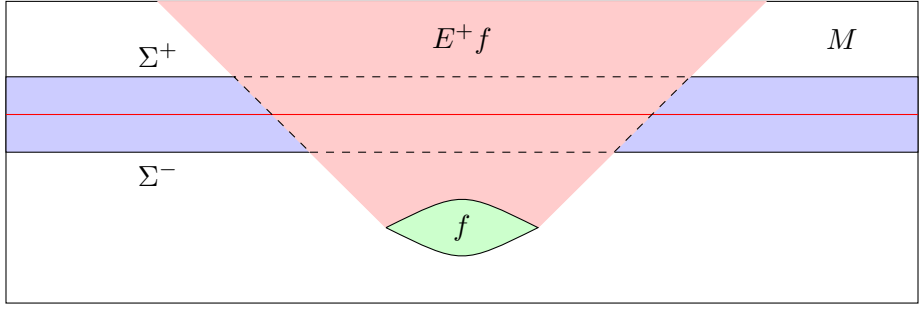
\begin{figure}
		\begin{center}
			\begin{tikzpicture}
				\draw (-6,-1)--++(12,0)--++(0,4)--++(-12,0)--cycle;
				\draw[fill=blue!20] (-6,1)--++(12,0)--++(0,1)--++(-12,0)--cycle;
				\draw[red] (-2,1)--++(4,0);
				\draw[fill=red!20,red!20] (-1,0)--(1,0)--++(3,3)--++(-8,0)--cycle;
				\draw[fill=green!20] (-1,0).. controls (0,-0.5).. (1,0).. controls (0,0.5).. (-1,0); 
				\draw[red] (-6,1.5)--++(12,0);
				\node at (0,0){$f$};		
				\node at (0,2.5){$E^+f$};
				\draw[dashed] (-2,1)--++(4,0)--++(1,1)--++(-6,0)--cycle;
				\node at (5,2.5){$M$};
				\node[above] at (-4,2){$\Sigma^+$};
				\node[below] at (-4,1){$\Sigma^-$};
			\end{tikzpicture}
		\end{center}
		\caption{Illustrating the proof of Theorem~\ref{thm:timeslice}. The red line represents a Cauchy surface within a region $O$ (not shown) that contains the region between Cauchy surfaces $\Sigma^\pm$. The support of $\chi E^+f$ is contained within the dashed boundary and is compact. In this instance $(1-\chi)E^-f=0$.\label{fig:timeslice}}
	\end{figure}
	Then  %$\chi$ has future-compact support and $1-\chi$ has past-compact support and 
	\[
	\chi E f = E^- f -\underbrace{(1-\chi )E^- f}_{\text{compact support}} - \underbrace{\chi E^+ f}_{\text{ditto}}
	\] 
	using facts about globally hyperbolic spacetimes and Green operators\footnote{$\chi$ has future-compact support; $1-\chi$ has past-compact support; $E^\pm f$ and $Ef$ have spatially compact support; and the intersection of a future/past-compact set with a spatially-compact set is compact.} so 
	\[
	[P,\chi]Ef = P\chi E f = f + Ph \in C_0^\infty(M)
	\]
	for some $h\in C_0^\infty(M)$. Meanwhile, $[P,\chi]$ is a partial differential operator with coefficients supported in $\supp\dd \chi\subset O$ and so the LHS is also supported in $O$.
	In the QFT we have
	\[
	\Phi([P,\chi]Ef) = \Phi(f+Ph) =\Phi(f) + \Phi(Ph)=\Phi(f),
	\]
	so every generator of $\Af(M)$ belongs to $\Af(M;O)$ -- hence $\Af(M;O)=\Af(M)$. 
\end{proof}

\clearpage

\section{Measurement schemes}

\begin{defn} 
	A QFT $\Af$ on globally hyperbolic spacetime $M$ consists of a unital $*$-algebra $\Af(M)$ and subalgebras $\Af(M;O)$ labelled by regions $O$ of $M$, subject to isotony, Einstein causality and the timeslice axiom. Recall that a state on $\Af(M)$ is a linear functional $\omega:\Af(M)\to\CC$ obeying $\omega(\1)=1$ and
	$\omega(A^*A)\ge 0$ for all $A\in\Af(M)$. 
\end{defn}
 
Our aim is to develop a framework for measurement on a QFT $\Af$, by modelling the measurement process. We will proceed in an abstract AQFT framework without specifying the nature of the theories under consideration. Let $\Af$ be the {\bf system} or {\bf target} theory. Let $\Bf$ be another QFT that will be a {\bf probe}. Then there is an
{\bf uncoupled combination} $\Uf=\Af\otimes\Bf$ with 
local algebras
\[
\Uf(M;O) = \Af(M;O)\otimes \Bf(M;O) \qquad \textnormal{(algebraic tensor product)}.
\]
{\emph{Exercise} -- Show that $\Uf$ is indeed a QFT!}

The broad idea is to couple the probe to the system in a localised spacetime coupling zone with `before' and `after' regions. Probe and system are prepared independently in the `before' region; in the `after' region we measure the probe and try to learn about the system.
The coupled theory has to be something different from the uncoupled combination in general, but we assume it is still a QFT.  
\begin{defn}
	A QFT $\Cf$ is a {\bf coupled combination} of $\Af$ and $\Bf$ 
	with temporally compact\footnote{Temporally compact $=$ contained in a region bounded by Cauchy surfaces.} {\bf coupling zone} $K\subset M$ (often $K$ is chosen compact) if to each region $O\subset 
	M\setminus \ch(K)$ there is an isomorphism $\iota_O:\Uf(M;O)\to\Cf(M;O)$ compatible with isotony in the sense that the diagram 
	\[
		\begin{tikzcd}
		\Uf(M;O_1) \arrow[d]\arrow[r,"\iota_{O_1}"] & \Cf(M;O_1) \arrow[d] \\
		\Uf(M;O_2) \arrow[r,"\iota_{O_2}"] & \Cf(M;O_2) 
	\end{tikzcd}
	\]
	commutes for all such $O_1\subset O_2$, where the unlabelled arrows are inclusions due to isotony.
\end{defn}
Here, an isomorphism of unital $*$-algebras is a unit-preserving invertible $*$-homomorphism.

In particular, let $M^\pm = M\setminus J^\mp(K)$ -- natural out/in regions,\footnote{The fact that $M^\pm$ are regions holds in globally hyperbolic spacetimes -- see~\cite{FewVer_QFLM:2018}} which contain Cauchy surfaces for $M$ (see Fig.~\ref{fig:Mpm}). Then one has chains of isomorphisms
\[
\begin{tikzcd}
	\Uf(M) \arrow[bend right=20, rrr, "\tau^\pm"]& \arrow[l] \Uf(M;M^\pm) \arrow[r, "\iota_{M^\pm}"]
	& \Cf(M;M^\pm)\arrow[r] & \Cf(M),
\end{tikzcd}
\]
in which the unlabelled isomorphisms are equalities from the timeslice property, and $\tau^\pm$ are formed so as to make the diagram commute.
%In short, $\tau^\pm$ restricts to $\Uf(M;M^\pm)$ as $\iota_{M^\pm}$.
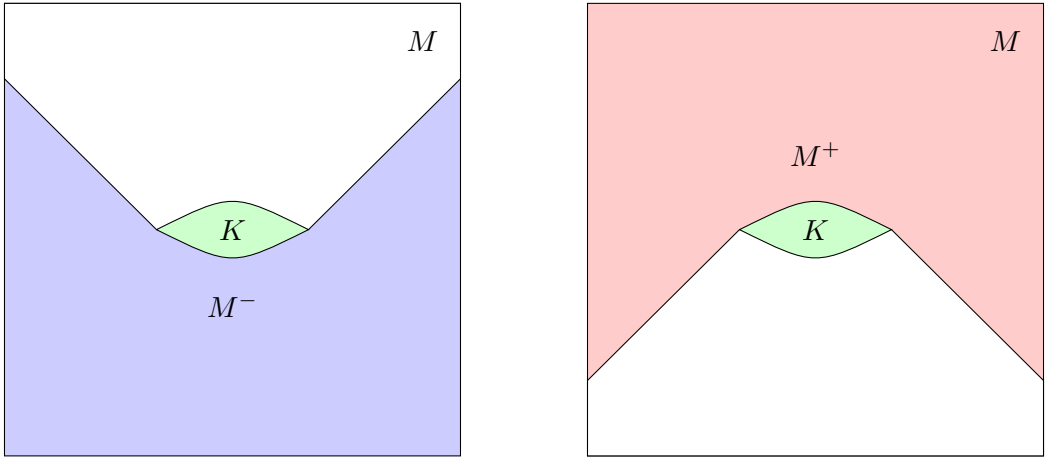
\begin{figure}
\begin{center}
	\begin{tikzpicture}
		\draw[fill=blue!20] (-3,-3)--++(6,0)--++(0,6)--++(-6,0)--cycle;
		\draw[fill=white] (-1,0)--(1,0)--++(2,2)--++(0,1)--++(-6,0)--++(0,-1)--cycle;
		\draw[fill=green!20] (-1,0).. controls (0,-0.5).. (1,0).. controls (0,0.5).. (-1,0);  
		\node at (0,0){$K$};		
		\node at (0,-1){$M^-$};
		\node at (2.5,2.5){$M$}; 
	\end{tikzpicture}\hfil
	\begin{tikzpicture}
		\draw[fill=red!20] (-3,-3)--++(6,0)--++(0,6)--++(-6,0)--cycle;
		\draw[fill=white] (-1,0)--(1,0)--++(2,-2)--++(0,-1)--++(-6,0)--++(0,1)--cycle;
		\draw[fill=green!20] (-1,0).. controls (0,-0.5).. (1,0).. controls (0,0.5).. (-1,0);  
		\node at (0,0){$K$};		
		\node at (0,1){$M^+$};
		\node at (2.5,2.5){$M$}; 
	\end{tikzpicture}
\end{center}
\caption{Diagrams showing the `in' region $M^-$ and `out' region $M^+$ of a compact coupling zone $K$.\label{fig:Mpm}}
\end{figure}

\paragraph{Prepare early, measure late} The physical dynamics of the measuring process is given by a coupled combination $\Cf$. 
However, $\Uf$ provides a convenient intermediate `fictitious language'
between a natural language description and the true dynamics. 

\begin{center}
\begin{tabular}{|c|c|c|}\hline
	Natural language  & $\Uf$ description & $\Cf$ description \\
	& (fictional) & (true dynamics)  \\[2pt] \hline 
	Prepare system and probe in states
	$\omega$ and $\sigma$ before coupling & $\omega\otimes\sigma$ & $\utilde{\omega}_\sigma=(\omega\otimes\sigma)\circ (\tau^-)^{-1}$\\
	Consider probe observable $B\in\Bf(M)$ & $\1\otimes B$  & $\tilde{B}=\tau^+(\1\otimes B)$ \\
	Expectation of probe observable after coupling & & $\utilde{\omega}_\sigma(\tilde{B})$ \\ \hline
\end{tabular}
\end{center}
Note e.g. that 
\[
\utilde{\omega}_\sigma(\iota_{M^-}X)=(\omega\otimes\sigma)(\tau^-)^{-1}\iota_{M^-}X) = (\omega\otimes\sigma)(X)
\]
for $X\in\Uf(M;M^-)$, showing that the state $\utilde{\omega}_\sigma$ is the correct `in' version of $\omega\otimes\sigma$.

To describe the measurement in terms of the system alone, we seek $A\in\Af(M)$ so that
\begin{equation}\label{eq:expectation_matching}
\utilde{\omega}_\sigma(\tilde{B})=\omega(A) \tag{$*$}
\end{equation}
for all $\omega$. The answer requires a useful gadget (closely related
to a conditional expectation).
\begin{defn}
	Let $\Ac$ and $\Bc$ be unital $*$-algebras, and let $\sigma$ be a state on $\Bc$. We define 
	\[
	\eta_\sigma^{(\Ac)}:\Ac\otimes\Bc\to\Ac
	\]
	by linear extension of
	\[
	\eta_\sigma^{(\Ac)}(A\otimes B)= \sigma(B)A, \qquad A\in\Ac,~B\in\Bc.
	\]
	We drop the superscript in $\eta_\sigma^{(\Ac)}$ if it's clear from context.
\end{defn}

\begin{lemma}
	$\eta_\sigma^{(\Ac)}$ is completely positive.\footnote{An element of $\Ac$ is called {\bf positive} if it is a finite sum of terms of form $Z^*Z$; {\bf complete positivity} of $\eta_\sigma^{(\Ac)}$ means that, for all $k\in\NN$ and all positive $P\in \Ac\otimes \Bc\otimes M_k(\CC^k)$, 
	$(\eta^{(\Ac)}\otimes\id)P$ is positive in $\Ac\otimes M_k(\CC)$.} In addition, one has the following identities
	\[
	\eta_\sigma(\1) = \1, \qquad \eta_\sigma(C^*) = \eta_\sigma(C)^*,\qquad
	A\eta_\sigma(C) = \eta_\sigma((A\otimes \1)C), \qquad
	\eta_\sigma(C)A = \eta_\sigma(C(A\otimes \1)),
	\]
	\[
	(\omega\otimes\sigma)(C)=\omega(\eta_\sigma(C)), \qquad \eta_\sigma^{(\Ac)}(C)\otimes B' = \eta_\sigma^{(\Ac\otimes\Bc')}(C\otimes_2 B')
	\]
	for $C\in\Ac\otimes\Bc$, $B'\in\Bc'$. Here $\otimes_2$ is defined by linear extension of $(A\otimes B)\otimes_2 B'=A\otimes B'\otimes B$.
\end{lemma}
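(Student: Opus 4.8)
The plan is to dispatch the six algebraic identities quickly and then concentrate on complete positivity, which is the only point requiring an idea.

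For the identities: each one is a linear — or, in the case of $\eta_\sigma(C^*)=\eta_\sigma(C)^*$, conjugate-linear — identity in $C$, so it suffices to check it on elementary tensors $C=A\otimes B$, where it reduces to the definition $\eta_\sigma(A\otimes B)=\sigma(B)A$ together with two facts about the state $\sigma$: normalisation $\sigma(\1)=1$, which gives $\eta_\sigma(\1)=\1$; and hermiticity $\sigma(B^*)=\overline{\sigma(B)}$, which is automatic for any positive linear functional on a unital $*$-algebra and gives $\eta_\sigma(C^*)=\eta_\sigma(C)^*$. The module relations such as $A\eta_\sigma(C)=\eta_\sigma((A\otimes\1)C)$, and the marginal identity $(\omega\otimes\sigma)(C)=\omega(\eta_\sigma(C))$, are immediate on $A\otimes B$; the last identity amounts to checking that, after the relabelling $(A\otimes B)\otimes_2 B'=A\otimes B'\otimes B$, the functional $\sigma$ is still evaluated against the same tensor slot. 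I would write out one or two of these explicitly and leave the rest as bookkeeping.

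For complete positivity, fix $k\in\NN$ and identify $\Ac\otimes\Bc\otimes M_k(\CC)$ with $M_k(\Ac\otimes\Bc)$ and $\Ac\otimes M_k(\CC)$ with $M_k(\Ac)$, so that $\eta_\sigma\otimes\id$ acts entrywise, $((\eta_\sigma\otimes\id)Z)_{ij}=\eta_\sigma(Z_{ij})$. A positive element of $M_k(\Ac\otimes\Bc)$ is a finite sum $\sum_\alpha Z_\alpha^*Z_\alpha$, and $\eta_\sigma\otimes\id$ is linear, so it is enough to show that $(\eta_\sigma\otimes\id)(Z^*Z)$ is positive in $M_k(\Ac)$ for a single $Z=(Z_{ij})$; its $(i,j)$ entry is $\sum_l\eta_\sigma(Z_{li}^*Z_{lj})$.

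The key step — the one place a small trick is needed — is to exploit that only finitely many elementary tensors appear, so that all the entries can be written simultaneously as $Z_{ij}=\sum_{\mu=1}^{N}A^{ij}_\mu\otimes B_\mu$ with one fixed finite list $B_1,\dots,B_N\in\Bc$ (independent of $i,j$) and coefficients $A^{ij}_\mu\in\Ac$. Then
\[
\eta_\sigma(Z_{li}^*Z_{lj})=\sum_{\mu,\nu}\sigma(B_\mu^*B_\nu)\,(A^{li}_\mu)^*A^{lj}_\nu ,
\]
so everything is governed by the scalar matrix $G=\big(\sigma(B_\mu^*B_\nu)\big)_{\mu\nu}\in M_N(\CC)$, which is positive semidefinite, being a Gram matrix for the form $(B,B')\mapsto\sigma(B^*B')$. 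Factor $G=T^*T$ with $T=(T_{\rho\mu})\in M_N(\CC)$ and put $W^{l\rho}_i:=\sum_\mu T_{\rho\mu}A^{li}_\mu\in\Ac$; substituting collapses the double sum into $\eta_\sigma(Z_{li}^*Z_{lj})=\sum_\rho(W^{l\rho}_i)^*W^{l\rho}_j$, so that
\[
\big((\eta_\sigma\otimes\id)(Z^*Z)\big)_{ij}=\sum_{l,\rho}(W^{l\rho}_i)^*W^{l\rho}_j .
\]
This is precisely the $(i,j)$ entry of $\sum_{l,\rho}(W^{l\rho})^*W^{l\rho}$, where $W^{l\rho}\in M_k(\Ac)$ is the matrix whose first row is $(W^{l\rho}_1,\dots,W^{l\rho}_k)$ and whose other rows vanish; hence $(\eta_\sigma\otimes\id)(Z^*Z)$ is positive, as required, the case $k=1$ giving plain positivity of $\eta_\sigma$. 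The main obstacle is really just organising this computation cleanly; conceptually it is the algebraic shadow of a Stinespring dilation for the slice map $\id\otimes\sigma$, with the finite Gram matrix $G$ in place of the probe Hilbert space.
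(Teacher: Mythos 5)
Your proposal is correct, and for the algebraic identities it does exactly what the paper prescribes: verify on elementary tensors $A\otimes B$ and extend by (conjugate-)linearity, the only non-definitional inputs being $\sigma(\1)=1$ and the hermiticity $\sigma(B^*)=\overline{\sigma(B)}$, which, as you note, is automatic for positive functionals on unital $*$-algebras. For complete positivity the paper gives no argument at all --- it defers to \cite{FewVer_QFLM:2018} --- so here you are supplying the missing content rather than diverging from it. Your argument is sound: reducing to a single $Z^*Z$ by linearity, writing all matrix entries over one finite family $B_1,\dots,B_N$, observing that $G_{\mu\nu}=\sigma(B_\mu^*B_\nu)$ is a positive semidefinite Gram matrix, factoring $G=T^*T$, and absorbing $T$ into the $\Ac$-coefficients to exhibit $(\eta_\sigma\otimes\id)(Z^*Z)$ as a sum of squares $\sum (W^{l\rho})^*W^{l\rho}$ in $M_k(\Ac)$. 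This is essentially the standard proof that states on $*$-algebras are completely positive (and is in the spirit of the argument in the cited reference), so the approaches are compatible; the only remark worth adding is that your identification of sums-of-squares positivity in $\Ac\otimes M_k(\CC)$ with that in $M_k(\Ac)$ deserves the one-line observation that the identification is a $*$-isomorphism, so the two notions of positive element coincide.
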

\begin{proof}
	Complete positivity is proved in \cite{FewVer_QFLM:2018}. 
%	To give some insight we prove positivity: 
%	observe that 
%	\[
%	\eta_\sigma^{(\Ac)}\left(\left(\sum_\alpha A_\alpha \otimes B_\alpha)^*
%	\right) \left(\sum_\beta A_\beta \otimes B_\beta\right)\right) = \sum_{\alpha,\beta}
%		\]
	For the identities, check that they hold on elements of the form $A\otimes B$ \emph{(exercise!)} and extend by linearity.
\end{proof}

\begin{thm}
	The equation~\eqref{eq:expectation_matching} is solved by 
	\[
	A = \varepsilon_\sigma(B):= \eta_\sigma(\Theta (\1\otimes B)),
	\]
	where $\Theta = (\tau^{-})^{-1}\circ \tau^+$ is the {\bf scattering map}. This is the unique solution if $\Ac$ is separated by its states (i.e., $\omega(A)=\omega(A')$ for all states $\omega$ $\implies$ $A=A'$).
\end{thm}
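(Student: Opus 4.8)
The plan is to unwind the definitions and then apply the portability identity $(\omega\otimes\sigma)(C)=\omega(\eta_\sigma(C))$ from the Lemma. First I would substitute $\utilde{\omega}_\sigma=(\omega\otimes\sigma)\circ(\tau^-)^{-1}$ and $\tilde{B}=\tau^+(\1\otimes B)$ into the left-hand side of~\eqref{eq:expectation_matching}, obtaining
\[
\utilde{\omega}_\sigma(\tilde{B})=(\omega\otimes\sigma)\bigl((\tau^-)^{-1}\tau^+(\1\otimes B)\bigr)=(\omega\otimes\sigma)\bigl(\Theta(\1\otimes B)\bigr),
\]
using the definition $\Theta=(\tau^-)^{-1}\circ\tau^+$. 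Here one should note that $\tau^\pm$ are isomorphisms $\Uf(M)\to\Cf(M)$ (built from the $\iota_{M^\pm}$ and the timeslice equalities), so $\Theta$ is an automorphism of $\Uf(M)=\Af(M)\otimes\Bf(M)$ and in particular $\Theta(\1\otimes B)$ is a genuine element of the algebraic tensor product, so that $\eta_\sigma$ may be applied to it.

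Next I would invoke the identity $(\omega\otimes\sigma)(C)=\omega(\eta_\sigma(C))$ with $C=\Theta(\1\otimes B)$, which immediately gives
\[
\utilde{\omega}_\sigma(\tilde{B})=\omega\bigl(\eta_\sigma(\Theta(\1\otimes B))\bigr)=\omega(\varepsilon_\sigma(B)),
\]
i.e.\ $A=\varepsilon_\sigma(B)$ solves~\eqref{eq:expectation_matching} for every state $\omega$. For uniqueness, suppose $A,A'\in\Af(M)$ both satisfy~\eqref{eq:expectation_matching}; then $\omega(A)=\utilde{\omega}_\sigma(\tilde{B})=\omega(A')$ for all states $\omega$, so $\omega(A-A')=0$ for all states, and if $\Af(M)$ is separated by its states this forces $A=A'$.

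There is really no serious obstacle here: the content is entirely bookkeeping with the isomorphisms $\tau^\pm$ and the algebraic identities already established for $\eta_\sigma$. The only point requiring a moment's care is the type-checking just mentioned — confirming that $\Theta$ genuinely lands in $\Af(M)\otimes\Bf(M)$ so that $\eta_\sigma^{(\Af(M))}$ is defined on $\Theta(\1\otimes B)$ — together with recalling that the identity $(\omega\otimes\sigma)(C)=\omega(\eta_\sigma(C))$ is exactly the tool that converts an expectation in the uncoupled combination into one for the system algebra alone. Everything else follows by linearity and the separation hypothesis.
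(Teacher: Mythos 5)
Your proposal is correct and follows essentially the same computation as the paper's proof (the paper runs the same chain of equalities from $\omega(\varepsilon_\sigma(B))$ to $\utilde{\omega}_\sigma(\tilde{B})$, using the identity $(\omega\otimes\sigma)(C)=\omega(\eta_\sigma(C))$ and the definition of $\Theta$, and disposes of uniqueness exactly as you do). The extra type-checking observation that $\Theta$ is an automorphism of $\Uf(M)$ is a sensible remark but does not change the argument.
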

\begin{proof}
	We calculate
	\[
	\omega(\varepsilon_\sigma(B)) = \omega(\eta_\sigma(\Theta (\1\otimes B)))= (\omega\otimes\sigma)((\tau^{-})^{-1}\circ\tau^+ (\1\otimes B)) = \utilde{\omega}_\sigma (\tilde{B}).
	\]
	Second part - true by definition.
\end{proof}
Equation~\eqref{eq:expectation_matching} shows that the expected outcome of the actual experiment reproduces the abstract expectation value of the system observable $A$ in state $\omega$. This is a
{\bf measurement scheme} for $A$, which is also called the
{\bf induced system observable}.

\begin{thm}
	$\varepsilon_\sigma:\Bf(M)\to\Af(M)$ is a completely positive map, obeying
	\[
	\varepsilon_\sigma(\1) =\1, \qquad
	\varepsilon_\sigma(B^*) = \varepsilon_\sigma(B)^*.
	\]
	so self-adjointness is preserved.
	(However, $\varepsilon_\sigma$ is \emph{not} a homomorphism in general).
\end{thm}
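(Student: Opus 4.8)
The plan is to derive each stated property of $\varepsilon_\sigma = \eta_\sigma \circ \Theta^*(\1 \otimes \cdot\,)$ from the corresponding property of its two building blocks: the scattering map $\Theta = (\tau^-)^{-1}\circ\tau^+$ and the slicing map $\eta_\sigma$. The key structural observation is that $\Theta$ is an isomorphism of unital $*$-algebras (being a composition of such isomorphisms $\tau^\pm$), hence in particular a unital $*$-homomorphism, and therefore completely positive; and that the map $B \mapsto \1 \otimes B$ from $\Bf(M)$ into $\Uf(M) = \Af(M)\otimes\Bf(M)$ is likewise a unital injective $*$-homomorphism, hence completely positive. The Lemma already records that $\eta_\sigma$ is completely positive with $\eta_\sigma(\1)=\1$ and $\eta_\sigma(C^*)=\eta_\sigma(C)^*$. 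So $\varepsilon_\sigma$ is a composition of three completely positive maps.

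First I would recall that a composition of completely positive maps is completely positive, which disposes of the main claim once the three factors are identified as completely positive. For the unitality identity: $\Theta(\1\otimes\1) = \Theta(\1) = \1$ since $\Theta$ is unit-preserving, and then $\varepsilon_\sigma(\1) = \eta_\sigma(\1) = \1$ by the Lemma. For the $*$-compatibility: $\Theta((\1\otimes B)^*) = \Theta(\1\otimes B^*)$ (using that $\otimes$ of $*$-algebras has $(\1\otimes B)^* = \1 \otimes B^*$ by $\Bf$'s relation Q2-type property, i.e.\ $*$ acts componentwise), and since $\Theta$ is a $*$-homomorphism this equals $\Theta(\1\otimes B)^*$; applying $\eta_\sigma(C^*) = \eta_\sigma(C)^*$ gives $\varepsilon_\sigma(B^*) = \varepsilon_\sigma(B)^*$. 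Consequently $B = B^*$ implies $\varepsilon_\sigma(B) = \varepsilon_\sigma(B)^*$, i.e.\ self-adjointness is preserved.

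The only point requiring a word of care is why $\varepsilon_\sigma$ fails to be a homomorphism in general: here I would simply note that although $\Theta$ and $B\mapsto\1\otimes B$ are homomorphisms, $\eta_\sigma$ is not — it satisfies only the module-type identities $A\,\eta_\sigma(C) = \eta_\sigma((A\otimes\1)C)$ from the Lemma, not multiplicativity, since $\sigma(B_1 B_2) \neq \sigma(B_1)\sigma(B_2)$ unless $\sigma$ is a character. A one-line example (e.g.\ a probe field $\Phi_{\Bf}(g)$ with $\Theta$ nontrivial, or even just $\Theta = \id$ and $\sigma$ a non-multiplicative state) makes the point that $\varepsilon_\sigma(B_1)\varepsilon_\sigma(B_2) \neq \varepsilon_\sigma(B_1 B_2)$ already at the level of $\eta_\sigma$.

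I do not anticipate a genuine obstacle here: the content is entirely in correctly attributing complete positivity to each of the three composed maps, and the mild subtlety is just bookkeeping — making sure the tensor-product $*$-structure and the unit are handled componentwise, which is immediate from the definition of the algebraic tensor product of unital $*$-algebras. If anything, the step most worth stating explicitly is the closure of complete positivity under composition, together with the remark that $\Theta^{-1}$ being a $*$-isomorphism is what makes $\Theta$ itself (not merely $\tau^\pm$) a $*$-homomorphism between the algebras in question.
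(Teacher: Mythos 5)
Your proposal is correct and is exactly the argument the paper intends (its proof is left as an exercise ``using the properties of $\eta_\sigma$ and $\Theta$''): writing $\varepsilon_\sigma$ as the composition of the unital $*$-homomorphisms $B\mapsto\1\otimes B$ and $\Theta$ with the completely positive map $\eta_\sigma$, and invoking closure of complete positivity under composition together with the identities from the Lemma. The only blemish is the stray notation $\Theta^*$ in your first line, which should simply read $\Theta$.
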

\begin{proof}
	Exercises, using the properties of $\eta_\sigma$ and $\Theta$.
\end{proof} 

\paragraph{Localisation}
Historically, the measurement scheme described above was developed to answer the question of what system observable is measured when one couples a probe to a QFT and measures a given probe observable, and to know where it is localisable. The first part is answered by the map $\varepsilon_\sigma$, while the second part is answered as follows. First observe that
\[
	[A,\varepsilon_\sigma(B)] = 
	[A,\eta_\sigma (\Theta (\1\otimes B))]
	= \eta_\sigma([A\otimes \1,\Theta (\1\otimes B)]) .
\]
We introduce the notation $K^\perp = M\setminus J(K)$ for the causal complement of $K\subset M$. We also say that $\Af$ has the
{\bf Haag property} if to any compact $K\subset M$, one has
\[
\{A\in\Af(M): [A,B]=0~\forall B\in \Af(M;K^\perp) \} =:\Af(M;K^\perp)^c\subset
\bigcap_O \Af(M;O),
\]
where the intersection is taken over connected regions containing $K$.
The superscript $c$ denotes a {\bf relative commutant}.
\begin{thm}
	(a) If $A\in\Af(M;K^\perp)$ then  
	\[
	[A,\varepsilon_\sigma(B)] = 0\qquad \forall B\in\Bf(M).
	\]
	Thus if $\Af$ has the Haag property and $K$ is compact, then 
	$\varepsilon_\sigma(B)\subset \Af(M;O)$ for any connected region
	$O$ containing $K$. 
	
	(b) If $B\in\Bf(M;K^\perp)$ then $\varepsilon_\sigma(B)\in \CC\1$.
\end{thm}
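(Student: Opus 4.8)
The plan is to exploit the key commutator identity already derived just before the theorem, namely
\[
[A,\varepsilon_\sigma(B)] = \eta_\sigma\bigl([A\otimes\1,\Theta(\1\otimes B)]\bigr),
\]
together with the structure of the scattering map $\Theta=(\tau^-)^{-1}\circ\tau^+$ and the compatibility of the isomorphisms $\iota_O$ with isotony. For part (a), suppose $A\in\Af(M;K^\perp)$. First I would argue that $A\otimes\1$, viewed as an element of $\Uf(M)$, lies in $\Uf(M;K^\perp)$, and that $K^\perp$ is causally disjoint from a neighbourhood where the coupling `lives'. The crucial point is that on $K^\perp$ — or more precisely on the `in' and `out' regions restricted away from $J(K)$ — the maps $\tau^\pm$ act as the identity in the appropriate sense: since $\iota_O$ for $O\subset M\setminus\ch(K)$ intertwines the uncoupled and coupled structures, and $\tau^\pm$ are built from $\iota_{M^\pm}$ and timeslice equalities, one gets that $\Theta$ restricted to observables localised in $K^\perp$ is trivial. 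Concretely, I would show $\Theta(A'\otimes\1)=A'\otimes\1$ for $A'\in\Af(M;K^\perp)$ (and likewise on the probe factor). Then $A\otimes\1$ commutes with $\Theta(\1\otimes B)$: one writes $\Theta(\1\otimes B)$ as (a limit of sums of) products of coupled generators, pushes everything back through $\tau^+$ or $\tau^-$ into $\Uf(M)$, and uses Einstein causality for $\Uf$ to see that the $K^\perp$-localised $A\otimes\1$ commutes with whatever the probe observable becomes under scattering. Hence the commutator vanishes, and applying $\eta_\sigma$ (which is linear and sends $0$ to $0$) gives $[A,\varepsilon_\sigma(B)]=0$ for all $B\in\Bf(M)$. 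The statement about the Haag property is then immediate: $\varepsilon_\sigma(B)$ lies in $\Af(M;K^\perp)^c$, which by the Haag property is contained in $\bigcap_O\Af(M;O)$ over connected regions $O\supset K$, and since each such $\Af(M;O)$ already contains $\varepsilon_\sigma(B)$ one concludes localisability in any such $O$.

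For part (b), suppose $B\in\Bf(M;K^\perp)$. Here I would run a parallel argument on the probe tensor factor: $\1\otimes B\in\Uf(M;K^\perp)$, so $\Theta(\1\otimes B)=\1\otimes B$ by the same triviality-of-scattering-on-the-causal-complement observation. Then
\[
\varepsilon_\sigma(B) = \eta_\sigma(\Theta(\1\otimes B)) = \eta_\sigma(\1\otimes B) = \sigma(B)\,\1 \in \CC\1,
\]
using the defining property $\eta_\sigma(A\otimes B)=\sigma(B)A$ with $A=\1$. So (b) reduces entirely to establishing that $\Theta$ fixes elements localised in $K^\perp$.

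The main obstacle is precisely that step — proving that $\Theta$ acts trivially on $\Uf(M;K^\perp)$, or more carefully on the images under $\tau^\pm$ of such elements. This requires unpacking how $\tau^-$ and $\tau^+$ are defined (via $\Uf(M)\xleftarrow{\sim}\Uf(M;M^\pm)\xrightarrow{\iota_{M^\pm}}\Cf(M;M^\pm)\xrightarrow{\sim}\Cf(M)$) and using that $K^\perp\subset M\setminus\ch(K)$ together with the isotony-compatibility square for the $\iota_O$: an observable localised in $K^\perp$ can be localised in a region $O\subset M^+\cap M^-\cap(M\setminus\ch(K))$ (one needs $K^\perp$ to meet both `in' and `out' regions, which holds because $K^\perp$ contains Cauchy data propagating from far past to far future), and on such a region $\iota_{M^+}$ and $\iota_{M^-}$ agree up to the canonical timeslice identifications, forcing $\tau^+=\tau^-$ there and hence $\Theta=\mathrm{id}$. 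Making this chain of identifications precise — keeping track of which timeslice equality is invoked where, and confirming that the relevant region is simultaneously a subregion of $M^\pm$ and of $M\setminus\ch(K)$ — is the delicate part; the rest is formal manipulation with $\eta_\sigma$ and Einstein causality.
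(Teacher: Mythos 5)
Your overall strategy is the paper's: everything hinges on the lemma that $\Theta$ acts trivially on $\Uf(M;K^\perp)$ (which you correctly identify as the crux and sketch via the same isotony-compatibility diagram chase), combined with the identity $[A,\varepsilon_\sigma(B)]=\eta_\sigma([A\otimes\1,\Theta(\1\otimes B)])$. Your part (b) and your deduction of the second half of (a) from the Haag property are exactly the paper's arguments (modulo one circular-sounding sentence at the end of (a), which is redundant given the correct sentence preceding it).

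There is, however, one step in (a) that would fail as you describe it. To show $[A\otimes\1,\Theta(\1\otimes B)]=0$ you propose to localise $\Theta(\1\otimes B)$ and invoke Einstein causality against the $K^\perp$-localised element $A\otimes\1$. This cannot work: $B$ is an arbitrary element of $\Bf(M)$, so $\1\otimes B$ (and hence $\Theta(\1\otimes B)$) carries no localisation in any region causally disjoint from $K^\perp$ — in general it is only localisable in all of $M$. The correct step needs no causality at all: $\Theta=(\tau^-)^{-1}\circ\tau^+$ is a $*$-isomorphism of $\Uf(M)$, hence preserves commutators, and by your own key lemma it fixes $A\otimes\1$. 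Therefore
\[
[A\otimes\1,\Theta(\1\otimes B)]=[\Theta(A\otimes\1),\Theta(\1\otimes B)]=\Theta\bigl([A\otimes\1,\1\otimes B]\bigr)=\Theta(0)=0,
\]
where the inner commutator vanishes for the purely algebraic reason that the two elements live in different tensor factors. With that substitution your argument is complete and coincides with the paper's proof.
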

\begin{proof}
	Both parts rely on a key property of scattering maps,
	that $\Theta$ acts trivially on $\Uf(M;K^\perp)$ (proved below).
	Given that, (a) uses the calculation
	\[
	[A\otimes\1,\Theta(\1\otimes B)] = \Theta[A\otimes\1,\1\otimes B] = 0,	
	\]
	and (b) uses 
	\[
	\varepsilon_\sigma(B) = \eta_\sigma (\Theta\1\otimes B)=\eta_\sigma(\1\otimes B) = \sigma(B)\1.
	\]
	The second part of~(a) is immediate because we have shown that $\varepsilon_\sigma(B)\in\Af(M;K^\perp)^c$.
\end{proof}
Part~(a) answers the localisation question, while~(b) reassuringly shows that no useful information about the system can be extracted from probe observables that are causally separated from the coupling zone, and should not even know of its existence. 

Now we prove the key property mentioned above.
\begin{lemma}\label{lem:Theta_on_Kperp}
	$\Theta$ acts trivially on $\Uf(M;K^\perp)$.
\end{lemma}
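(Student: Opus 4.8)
The plan is to reduce the claim to the statement that the two ``scattering dressings'' $\tau^+$ and $\tau^-$ coincide on $\Uf(M;K^\perp)$: since $\Theta=(\tau^-)^{-1}\circ\tau^+$, this immediately yields $\Theta X=X$ for every $X\in\Uf(M;K^\perp)$. The conceptual point is that $\tau^\pm$ are assembled from the \emph{same} family of coupling isomorphisms $\iota_O$ ($O$ a region in $M\setminus\ch(K)$), merely read off through the two different Cauchy-surface-containing regions $M^\pm$; and $K^\perp$ lies inside \emph{both} $M^+$ and $M^-$, so the isotony-compatibility of the $\iota_O$ pins the two readings together there.

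I would start with the geometric bookkeeping. Standard facts about globally hyperbolic spacetimes give that $K^\perp=M\setminus J(K)$ is a region; from $x\notin J^+(K)\cup J^-(K)$ one reads $x\in M\setminus J^+(K)=M^-$ and $x\in M\setminus J^-(K)=M^+$, so $K^\perp\subset M^+\cap M^-$; and $\ch(K)=J^+(K)\cap J^-(K)\subset J(K)$ gives $K^\perp\subset M\setminus\ch(K)$, so the coupling isomorphism $\iota_{K^\perp}:\Uf(M;K^\perp)\to\Cf(M;K^\perp)$ is available. Because $M^\pm$ contain Cauchy surfaces of $M$, the timeslice axiom (for $\Uf$ and for $\Cf$) identifies $\Uf(M;M^\pm)=\Uf(M)$ and $\Cf(M;M^\pm)=\Cf(M)$; unwinding the chains defining $\tau^\pm$, the restriction of $\tau^\pm$ to $\Uf(M;K^\perp)\subset\Uf(M;M^\pm)$ is simply $\iota_{M^\pm}$ followed by the isotony inclusion $\Cf(M;M^\pm)\hookrightarrow\Cf(M)$.

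Now for $X\in\Uf(M;K^\perp)$ I would apply the commuting square in the definition of a coupled combination to each of the inclusions $K^\perp\subset M^+$ and $K^\perp\subset M^-$, obtaining $\iota_{M^\pm}(X)=j^\pm\big(\iota_{K^\perp}(X)\big)$ with $j^\pm:\Cf(M;K^\perp)\hookrightarrow\Cf(M;M^\pm)$ the isotony inclusions. Composing with $\Cf(M;M^\pm)\hookrightarrow\Cf(M)$, both composites reduce to the single isotony inclusion $\Cf(M;K^\perp)\hookrightarrow\Cf(M)$, independent of the sign, so $\tau^+(X)$ and $\tau^-(X)$ are one and the same element $\iota_{K^\perp}(X)$ viewed in $\Cf(M)$; hence $\tau^+=\tau^-$ on $\Uf(M;K^\perp)$ and $\Theta$ restricts to the identity there (no passage to generators is needed). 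The proof is not hard; the only thing needing attention is keeping the identifications straight — four timeslice identities ($\Uf$ and $\Cf$, at $M^+$ and at $M^-$) and use of the isotony-compatibility square for the \emph{two} inclusions $K^\perp\subset M^\pm$ rather than one — the single substantive input being the standard fact that $K^\perp$ is a region contained in $M^+\cap M^-\cap(M\setminus\ch(K))$.
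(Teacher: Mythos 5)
Your proposal is correct and is essentially the paper's own argument: the paper assembles exactly these ingredients (the defining chains for $\tau^\pm$, the isotony-compatibility squares for the inclusions $K^\perp\subset M^\pm$, and transitivity of inclusions) into a single commuting diagram, whereas you trace the same identifications elementwise. The only difference is presentational, plus your explicit check that $K^\perp\subset M^+\cap M^-\cap(M\setminus\ch(K))$, which the paper leaves implicit.
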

\begin{proof} Consider the diagram
	\[
		\begin{tikzcd}
			\Uf(M)\arrow[bend right=70,ddddrr,swap,"\tau^+"] & & \arrow[ll]\Uf(M;K^\perp)\arrow[dr]\arrow[dl]\arrow[dd,"\iota_{K^\perp}"] \arrow[rr] & &\Uf(M)\arrow[bend left = 70,ddddll,"\tau^-"] \\
			&\Uf(M;M^+) \arrow[ul,"\cong"]\arrow[dd,"\iota_{M^+}"] & & \Uf(M;M^-)\arrow[dd,"\iota_{M^-}"]\arrow[ur,"\cong"]& \\
			& &  \Cf(M;K^\perp)\arrow[dd]\arrow[dr]\arrow[dl] & &\\
			& \Cf(M;M^+)\arrow[dr,"\cong"] & & \Cf(M;M^-)\arrow[dl,"\cong"] &\\
			& & \Cf(M) & &
		\end{tikzcd}	
	\]
in which the arrows labelled with $\cong$ are timeslice isomorphisms, and all unlabelled arrows are isotony inclusions. The two outer lobes
commute by definition of $\tau^\pm$. The triangles all commute by transitivity of inclusion. The two quadrilaterals commute
by compatibility of the isomorphisms $\iota_O$ with isotony. 
So the diagram commutes in full, which means that the restrictions of
$\tau^\pm$ to $\Uf(M;K^\perp)$ agree, which means that $\Theta=(\tau^-)^{-1}\circ\tau^+$ acts trivially on $\Uf(M;K^\perp)$.
\end{proof}

\clearpage

\section{State updates}

\begin{defn}
	An {\bf effect} in a unital $*$-algebra $\Ac$ is $E\in\Ac$ such that
	$E\ge 0$ and $\1-E\ge 0$. (Here $X\ge 0$ iff $X=\sum_\alpha A_\alpha^* A_\alpha$ for some $A_\alpha$'s in $\Ac$, so all effects are self-adjoint.)
	The set of effects is denoted $\Eff(\Ac)$. An effect $E$ is {\bf sharp} iff $E$ is a projection, and {\bf unsharp} otherwise.
\end{defn}
An effect represents a binary test so that
\[
\omega(E) = \Prob{\text{Test of $E$ returns $1$}}{\omega}.
\]
We often say success/fail or pass/fail in place of $1$/$0$.

If $E\in\Eff(\Ac)$ and $F\in\Eff(\Bc)$ then $E\otimes F\in\Eff(\Ac\otimes\Bc)$ (\emph{exercise}) is interpreted as the 
test $E\& F$, because when the  tests are independent one has
\[
(\omega\otimes\sigma)(E\otimes F) = \omega(E)\sigma(F)=
\Prob{E=1}{\omega} \Prob{F=1}{\sigma} = \Prob{(E=1) \& (F=1)}{\omega,\sigma}.
\]

Let $\Af$ and $\Bf$ be system and probe QFTs, with coupled combination $\Cf$ defined as above. Let $E\in\Eff(\Af(M))$, $F\in\Eff(\Bf(M))$ 
and consider a post-coupling test of $E\& F$ when system and probe are prepared in states $\omega$ and $\sigma$ in the pre-coupling zone.
A similar argument to that used when considering the measurement schemes shows that
\[
\Prob{E\& F}{\omega\otimes\sigma}= (\omega\otimes\sigma)(\Theta (E\otimes F))=:\If_\sigma(F)(\omega)(A),
\]
where, for any state $\sigma$ on $\Bf(M)$, the pre-instrument $\If_\sigma:\Eff(\Bc)\to \Lin (\Af(M)^*)$ is defined by 
\[
\If_\sigma(F)(\omega)(A) = (\omega\otimes\sigma)(\Theta (A\otimes F))
=\omega(\eta_\sigma(\Theta (A\otimes F))).
\]
In particular, 
\[
\If_\sigma(F)(\omega)(\1)= \omega(\varepsilon_\sigma(F)) = \Pb(\varepsilon_\sigma(F);\omega).
\]
The classical conditional probability for passing $E$, conditioned on passing $F$ is
\[
\Pb_\sigma(E|F;\omega) = \frac{\Pb_\sigma(E\& F;\omega)}{
\Pb_\sigma(\underbrace{F}_{=\1 \& F};\omega)} = \frac{\If_\sigma(F)(\omega)(E)}{\If_\sigma(F)(\omega)(\1)},
\]
where we acknowledge the dependence on the probe preparation state in the subscript.
Set 
\[
\omega' = \frac{\If_\sigma(F)(\omega)}{\If_\sigma(F)(\omega)(\1)}.
\]
Then $\omega'$ is a state ($\omega'(\1)=1$ by construction; 
as $F\ge 0$ one has $\Theta(A\otimes F)\ge 0$ and hence $\eta_\sigma(\Theta (A\otimes F))\ge 0$ by positivity of $\eta_\sigma$). Clearly,
\[
\omega'(E) = \Pb_\sigma(E|F;\omega) ,
\]
so $\omega'$ is the state that predicts the probability of $E$ conditioned on a success in $F$. Updating the state from $\omega$ to $\omega'$ allows us to compute the conditional probability directly. This, we claim, is the role of a {\bf state update rule}. This particular update is {\bf selective} because it is conditioned on $F$. A {\bf nonselective update} arises by conditioning on the trivially passed test, i.e., $F=\1$, resulting in
\[
\omega'_{\text{ns}} = \If_\sigma(\1)(\omega)
, \qquad \text{i.e.,}\quad
\omega'_{\text{ns}}(A) = (\omega\otimes\sigma)(A\otimes \1).
\]
\emph{Exercise:} Show that 
\[
\omega'_{\text{ns}} = \Pb_\sigma(E;\omega)\omega'_E + \Pb_\sigma(\neg E;\omega)\omega'_{\neg E},
\]
where $\neg E=\1-E$, so it is the update appropriate when a measurement has been made but the outcome is unknown, thus requiring a sum over outcomes weighted by their probabilities. 

Now suppose that $A\in\Af(M;K^\perp)$. Then 
\[
\If_\sigma(F)(\omega)(A)= \omega(\eta_\sigma \Theta (A\otimes F)),
\]
but $\Theta (A\otimes F) = (A\otimes 1)\Theta (1\otimes F)$, so
by the properties of $\eta_\sigma$, 
\[
\If_\sigma(F)(\omega)(A)= \omega (A\eta_\sigma (\Theta (1\otimes F))) = \omega(A\varepsilon_\sigma(F)).
\]
Then the expectation of $A$ in the updated state is
\[
\omega'(A) = \frac{\omega(A\varepsilon_\sigma(F))}{\omega(\varepsilon_\sigma(F))}.
\]
Note that the expectation value in the selectively updated state will differ in general from the original state, despite the fact that
$A$ is localisable spacelike to the coupling region. Indeed, 
$\omega'(A)=\omega(A)$ precisely if $A$ and $\varepsilon_\sigma(F)$ are uncorrelated in the state $\omega$, i.e., 
\[
\omega(A\varepsilon_\sigma(F))=\omega(A)\omega(\varepsilon_\sigma(F)).
\]
There is no contradiction or curiosity here: the same effect can be obtained in purely classical measurements (recall or google the famous example of Bertlmann's socks). One might call it
{\bf unspooky action at a distance} [though there is neither spookiness, nor action...].

But now consider the nonselective situation, again with $A\in\Af(M;K^\perp)$. This is the same as setting $F=\1$, so  
\[
\omega'_{\text{ns}} (A) = \omega(A),
\]
i.e., we have established the {\bf principle of blissful ignorance} that the expectation value is unchanged when the measurement is nonselective. This is just as well: science would be very difficult if one had to know about all experiments in one's causal complement, in order to appropriately describe one's own.

Lastly, a recent result~\cite{Fewster:2025a} is that, in suitable models (cf.\ Section~\ref{sect:examples}) the nonselective update of a Hadamard state is Hadamard. 

\clearpage
\section{Multiple measurements}

If $K_1,K_2$ are temporally compact sets, write $K_1\triangleleft K_2$
if $K_1 \cap J^+(K_2)=\emptyset$ (equivalently $K_2 \cap J^-(K_1)=\emptyset$). In a globally hyperbolic spacetime $K_1\triangleleft K_2$ if and only if there is a Cauchy surface $\Sigma$ so that $K_1$ ($K_2$) lies strictly to the past (future) of $\Sigma$. Note that $K_1$ and $K_2$ are causally disjoint if and only if both $K_1\triangleleft K_2$ and $K_2\triangleleft K_1$. 
\begin{defn}
	A finite collection of temporally compact subsets is {\bf causally orderable} if it can be labelled $K_1,\ldots,K_n$ so that 
	$K_1\triangleleft K_2 \triangleleft\cdots \triangleleft K_n$, 
	in which case we say that $K_1,\ldots,K_n$ are {\bf causally ordered}. A causally orderable collection may admit more than one
	causal order.
\end{defn}
 
Consider causally ordered $K_1\triangleleft K_2$. Suppose $\Af$ is a system QFT and $\Bf_1$ and $\Bf_2$ are probe theories, and that there are coupled combinations $\Cf_j$ of $\Af$ with $\Bf_i$ for $j=1,2$ with coupling zones $K_j$. On the other hand, we could consider 
$\Bf_1\otimes \Bf_2$ as a combined probe theory, and that there is a
coupled combination of $\Af$ and $\Bf_1\otimes \Bf_2$ with coupling
zone $K_1\cup K_2$, which is in some sense a merger of $\Cf_1$ and $\Cf_2$. If all the theories are Lagrangian and the couplings are all
defined by interaction terms then it is clear what we mean by the merged theory -- we simply add the two interaction terms together. 
In general, we need make an assumption of {\bf causal factorisation}
\[
\Theta = (\Theta_1\otimes \id )\circ (\Theta_2\otimes_2\id),
\]
where $\Theta$ is the scattering map of $\Cf$ relative to $\Uf = \Af\otimes (\Bf_1\otimes\Bf_2)$ while $\Theta_j$ is the scattering map of $\Cf_j$ relative to $\Uf_j=\Af\otimes\Bf_j$. Causal factorisation holds in models and is expected in reasonable general models. A stronger additivity property is built in to the Buchholz--Fredenhagen approach to constructing interacting models~\cite{BuchholzFredenhagen:2020}.

\begin{thm}[Composition of pre-instruments] With $K_1\triangleleft K_2$ and assuming causal factorisation, one has
	\[
	\If_{\sigma_2}(E_2)\circ \If_{\sigma_1}(E_1) = \If_{\sigma_1\otimes\sigma_2}(E_1\otimes E_2)
	\]
	for $E_j\in\Eff(\Bf(M))$ ($j=1,2$).
\end{thm}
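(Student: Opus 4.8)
The plan is to unwind both sides into explicit elements of $\Af(M)$, evaluated in $\omega$, and then to match them using the causal factorisation hypothesis together with the algebraic identities for $\eta_\sigma$ recorded in the Lemma. By definition $\If_{\sigma_1}(E_1)(\omega)(A')=\omega\big(\eta_{\sigma_1}^{(\Af)}(\Theta_1(A'\otimes E_1))\big)$ for any linear functional $\omega$ on $\Af(M)$, so applying $\If_{\sigma_2}(E_2)$ to the resulting functional and substituting gives
\[
\big(\If_{\sigma_2}(E_2)\circ\If_{\sigma_1}(E_1)\big)(\omega)(A)=\omega\Big(\eta_{\sigma_1}^{(\Af)}\big(\Theta_1\big(\eta_{\sigma_2}^{(\Af)}(\Theta_2(A\otimes E_2))\otimes E_1\big)\big)\Big),
\]
whereas the right-hand side is $\omega\big(\eta_{\sigma_1\otimes\sigma_2}^{(\Af)}(\Theta(A\otimes E_1\otimes E_2))\big)$. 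Hence it is enough to prove that the two elements of $\Af(M)$ appearing inside $\omega(\,\cdot\,)$ agree for every $A\in\Af(M)$; since all the operations in sight are linear, it suffices to verify this on a simple tensor $A\otimes E_1\otimes E_2$, and linearity in the second argument then covers arbitrary elements of $\Bf_1(M)\otimes\Bf_2(M)$ (so the restriction to effects plays no role in the proof).

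Next I would record three elementary identities, each checked on simple tensors and extended by linearity, all consequences of $\eta_\sigma^{(\Ac)}(A\otimes B)=\sigma(B)A$ and of the relation $\eta_\sigma^{(\Ac)}(C)\otimes B'=\eta_\sigma^{(\Ac\otimes\Bc')}(C\otimes_2 B')$ from the Lemma. (i) A tower identity $\eta_{\sigma_1\otimes\sigma_2}^{(\Af)}=\eta_{\sigma_1}^{(\Af)}\circ\eta_{\sigma_2}^{(\Af\otimes\Bf_1)}$ (both sides send $A\otimes B_1\otimes B_2$ to $\sigma_1(B_1)\sigma_2(B_2)A$). (ii) $\eta_{\sigma_2}^{(\Af\otimes\Bf_1)}\circ(\Theta_1\otimes\id)=\Theta_1\circ\eta_{\sigma_2}^{(\Af\otimes\Bf_1)}$, because, in the ordering $\Af\otimes\Bf_1\otimes\Bf_2$ of $\Uf(M)$, the map $\Theta_1\otimes\id$ acts on the first two slots and leaves untouched exactly the $\Bf_2$-slot on which $\eta_{\sigma_2}$ acts. (iii) $\eta_{\sigma_2}^{(\Af\otimes\Bf_1)}\big((\Theta_2\otimes_2\id)(A\otimes E_1\otimes E_2)\big)=\eta_{\sigma_2}^{(\Af)}(\Theta_2(A\otimes E_2))\otimes E_1$: one observes that $(\Theta_2\otimes_2\id)(A\otimes E_1\otimes E_2)=\Theta_2(A\otimes E_2)\otimes_2 E_1$ and applies the Lemma's $\otimes_2$ identity with $C=\Theta_2(A\otimes E_2)$, $B'=E_1$ and $\sigma=\sigma_2$.

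Finally I would assemble these, starting from the right-hand side: by (i),
\[
\eta_{\sigma_1\otimes\sigma_2}^{(\Af)}\big(\Theta(A\otimes E_1\otimes E_2)\big)=\eta_{\sigma_1}^{(\Af)}\Big(\eta_{\sigma_2}^{(\Af\otimes\Bf_1)}\big(\Theta(A\otimes E_1\otimes E_2)\big)\Big);
\]
now substitute the causal factorisation $\Theta=(\Theta_1\otimes\id)\circ(\Theta_2\otimes_2\id)$, push $\eta_{\sigma_2}^{(\Af\otimes\Bf_1)}$ through $\Theta_1\otimes\id$ using (ii), and evaluate the remaining $\eta_{\sigma_2}^{(\Af\otimes\Bf_1)}$ on $(\Theta_2\otimes_2\id)(A\otimes E_1\otimes E_2)$ using (iii); the result is exactly $\eta_{\sigma_1}^{(\Af)}\big(\Theta_1\big(\eta_{\sigma_2}^{(\Af)}(\Theta_2(A\otimes E_2))\otimes E_1\big)\big)$, the element found inside the left-hand side, so the two pre-instruments coincide.

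I expect no analytic difficulty here — the whole argument is algebraic — and the one thing that genuinely needs care, hence the main obstacle, is the bookkeeping of tensor slots: one must keep consistently in mind that $\Theta_1\otimes\id$ acts on the $\Af$ and $\Bf_1$ slots while fixing $\Bf_2$, whereas $\Theta_2\otimes_2\id$ acts on the $\Af$ and $\Bf_2$ slots while fixing the \emph{middle} $\Bf_1$. With those conventions pinned down, identities (ii) and (iii) are immediate and the composition is routine.
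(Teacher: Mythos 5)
Your proof is correct and is essentially the paper's argument: the same ingredients (the $\otimes_2$ identity for $\eta_\sigma$, the observation that $\Theta_2(A\otimes E_2)\otimes_2 E_1=(\Theta_2\otimes_2\id)(A\otimes E_1\otimes E_2)$, and causal factorisation) appear in the same roles. The only difference is presentational — you work with algebra elements inside $\omega(\cdot)$ and commute $\eta_{\sigma_2}$ past $\Theta_1\otimes\id$ directly, whereas the paper phrases the same steps dually via $(\Theta_1^*(\omega\otimes\sigma_1))\otimes\sigma_2=(\Theta_1\otimes\id)^*(\omega\otimes\sigma_1\otimes\sigma_2)$.
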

\begin{proof}
	A calculation, using the definitions and properties of $\eta_\sigma$-maps, together with causal factorisation
	\begin{align*}
		\If_{\sigma_2}(E_2)(\If_{\sigma_1}(E_1)(\omega))(A) &=
		(\If_{\sigma_1}(E_1)(\omega)) (\eta^{(\Af(M))}_{\sigma_2} \Theta_2 (A\otimes E_2) ) \\
		&= (\Theta_1^*(\omega\otimes\sigma_1))((\eta_{\sigma_2}^{(\Af(M))} \Theta_2 (A\otimes E_2))\otimes E_1) \\
		&= (\Theta_1^*(\omega\otimes\sigma_1))(\eta_{\sigma_2}^{(\Af(M)\otimes\Bf_1(M))}( \Theta_2 (A\otimes E_2)\otimes_2 E_1)) \\
		&=((\Theta_1^*(\omega\otimes\sigma_1))\otimes\sigma_2)
		( (\Theta_2\otimes_2\id) (A\otimes E_1\otimes E_2))\\
		&= (\Theta_1\otimes \id)^* (\omega\otimes\sigma_1\otimes\sigma_2)( (\Theta_2\otimes_2\id) (A\otimes E_1\otimes E_2)) \\
		&= (\omega\otimes\sigma_1\otimes\sigma_2)(\Theta (A\otimes E_1\otimes E_2))\\
		&= \If_{\sigma_1\otimes\sigma_2}(E_1\otimes E_2)(\omega)(A).
	\end{align*}
	As $A$ and $\omega$ are arbitrary, the result is proved.
\end{proof}
 
\begin{cor}
	If $K_1$ and $K_2$ are causally disjoint then, for $E_j\in\Eff(\Bf_j(M))$ ($j=1,2$), 
	\[
	\If_{\sigma_2}(E_2)\circ \If_{\sigma_1}(E_1) = \If_{\sigma_1\otimes\sigma_2}(E_1\otimes E_2) = 
	\If_{\sigma_1}(E_1)\circ \If_{\sigma_2}(E_2) . 
	\]
	Consequently, the corresponding selective or nonselective updates can be performed in either order, or in a single combined update.
\end{cor}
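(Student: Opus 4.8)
The plan is to reduce everything to the Composition of pre-instruments theorem, applied in the two possible causal orders. Since $K_1$ and $K_2$ are causally disjoint, the text's remark tells us that both $K_1\triangleleft K_2$ and $K_2\triangleleft K_1$ hold. Applying the theorem with the order $K_1\triangleleft K_2$ gives at once
\[
\If_{\sigma_2}(E_2)\circ\If_{\sigma_1}(E_1) = \If_{\sigma_1\otimes\sigma_2}(E_1\otimes E_2),
\]
which is the first claimed equality. Applying the theorem a second time, but with the roles of the two probes interchanged and using the order $K_2\triangleleft K_1$, gives
\[
\If_{\sigma_1}(E_1)\circ\If_{\sigma_2}(E_2) = \If_{\sigma_2\otimes\sigma_1}(E_2\otimes E_1).
\]
So the corollary follows once the two ``merged'' pre-instruments on the right-hand sides are shown to coincide, i.e.
\[
\If_{\sigma_1\otimes\sigma_2}(E_1\otimes E_2) = \If_{\sigma_2\otimes\sigma_1}(E_2\otimes E_1).
\]

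Next I would make precise the sense in which the two sides of the last display are the same experiment. The left side is built from a coupled combination $\Cf$ of $\Af$ with the combined probe $\Bf_1\otimes\Bf_2$ and coupling zone $K_1\cup K_2$; the right side from a coupled combination $\Cf'$ of $\Af$ with $\Bf_2\otimes\Bf_1$ and coupling zone $K_2\cup K_1=K_1\cup K_2$. There is a canonical flip isomorphism $\phi:\Bf_1\otimes\Bf_2\to\Bf_2\otimes\Bf_1$, and I would invoke the symmetry of the merger construction under relabelling of the probes: $\Cf'$ is $\Cf$ composed with $\id_{\Af}\otimes\phi$ on the region algebras outside $\ch(K_1\cup K_2)$, so the two scattering maps are intertwined by $\id_{\Af}\otimes\phi$. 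Since moreover $\eta^{(\Af(M))}_{\sigma_1\otimes\sigma_2}$ and $\eta^{(\Af(M))}_{\sigma_2\otimes\sigma_1}$ are related by this same flip (because $(\sigma_1\otimes\sigma_2)\circ\phi^{-1}=\sigma_2\otimes\sigma_1$ on product elements, hence everywhere) and $\phi(E_1\otimes E_2)=E_2\otimes E_1$, chasing these identifications through the definition of $\If_\sigma$ yields the wanted equality. A more computational variant would bypass $\phi$: apply causal factorisation in both orders, $\Theta=(\Theta_1\otimes\id)\circ(\Theta_2\otimes_2\id)$ for $K_1\triangleleft K_2$ and the analogous identity for $\Theta'$ with $K_2\triangleleft K_1$, and then verify $\If_{\sigma_1\otimes\sigma_2}(E_1\otimes E_2)=\If_{\sigma_2\otimes\sigma_1}(E_2\otimes E_1)$ by the same kind of $\eta_\sigma$- and $\otimes_2$-manipulation used in the proof of the theorem.

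For the statement about updates, I would then observe that setting $E_1$ and/or $E_2$ equal to $\1$ recovers the nonselective and mixed cases, and that in every case the unnormalised updated functional is the single map $\omega\mapsto\If_{\sigma_1\otimes\sigma_2}(E_1\otimes E_2)(\omega)$; having shown this equals both $\If_{\sigma_2}(E_2)\circ\If_{\sigma_1}(E_1)$ and $\If_{\sigma_1}(E_1)\circ\If_{\sigma_2}(E_2)$, and noting the normalisation is the common number $\If_{\sigma_1\otimes\sigma_2}(E_1\otimes E_2)(\omega)(\1)$, the two sequential updates and the combined update all produce the same state.

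I expect the main obstacle to be the middle step: pinning down exactly the sense in which the two mergers are identified by the flip $\phi$, since ``merger of $\Cf_1$ and $\Cf_2$'' is only characterised informally in the text. In a genuinely Lagrangian model this is transparent (one adds the two interaction terms, and addition is commutative), but in the abstract setting one must either fold the symmetry of the merger into the causal-factorisation hypothesis or derive it from a uniqueness property of the coupled combination; stating this hypothesis cleanly is the delicate point, after which the remainder is routine bookkeeping with $\eta_\sigma$ and $\otimes_2$.
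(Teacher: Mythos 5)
Your proposal is correct, and its skeleton --- apply the composition theorem once in the order $K_1\triangleleft K_2$ and once in the order $K_2\triangleleft K_1$, then identify the two middle terms --- is exactly how the corollary is meant to follow (the paper offers no separate proof; it is presented as an immediate consequence of the theorem). The one place where you work harder than necessary is the middle step. Because you apply the theorem \emph{verbatim} with the probe labels interchanged, you manufacture a second merged theory with probe $\Bf_2\otimes\Bf_1$ and must then argue that the flip isomorphism intertwines the two scattering maps and the two $\eta_\sigma$-maps. The cleaner route is to keep a \emph{single} merged coupled combination $\Cf$ of $\Af$ with $\Bf_1\otimes\Bf_2$ and coupling zone $K_1\cup K_2$, with its one scattering map $\Theta$, and to assume causal factorisation relative to this fixed $\Theta$ in \emph{both} causal orders (the second factorisation uses the obvious reshuffled embeddings, just as $\otimes_2$ does in the first). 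Then the middle term
\[
\If_{\sigma_1\otimes\sigma_2}(E_1\otimes E_2)(\omega)(A)=(\omega\otimes\sigma_1\otimes\sigma_2)\bigl(\Theta(A\otimes E_1\otimes E_2)\bigr)
\]
is defined intrinsically, independently of any ordering, and the proof of the theorem runs through mutatis mutandis in each order to give each outer term equal to it --- no flip isomorphism and no comparison of two mergers is needed. That said, your diagnosis of the delicate point is sound: whichever route one takes, the extra input beyond the stated theorem is that causal factorisation is available for both orderings of a causally disjoint pair, and your fallback of folding this into the hypothesis (or checking it in Lagrangian/Green-hyperbolic models, where it holds because the interaction terms simply add) is the right way to make the statement honest. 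Your treatment of the final sentence about selective and nonselective updates, via $E_j=\1$ and the common normalisation $\If_{\sigma_1\otimes\sigma_2}(E_1\otimes E_2)(\omega)(\1)$, is correct as written.
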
 
This is a central consistency result, and a major bonus of deriving the update rules from QFT rather than imposing rules motivated from quantum mechanics. 
\begin{cor}
	Suppose $K_1\triangleleft K_2$ and let $E_j\in\Eff(\Bf_j(M))$, $j=1,2$. Then 
	\[
	\Pb_{\sigma_1\otimes\sigma_2}(E_1\&E_2;\omega) = \Pb_{\sigma_2}(E_2;\omega')\Pb_{\sigma_1}(E_1;\omega),
	\]
	or equivalently,
	\[
	\omega(\varepsilon_{\sigma_1\otimes\sigma_2}(E_1\otimes E_2)) = 
	\omega'(\varepsilon_{\sigma_2}(E_2))\omega(\varepsilon_{\sigma_1}(E_1)),
	\]
	where $\omega'$ is the updated system state consequent on a successful test of $E_1$.
\end{cor}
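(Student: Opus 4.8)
The plan is to extract the statement from the preceding theorem on composition of pre-instruments by evaluating everything on the unit and keeping careful track of normalisations. The two ingredients I would use are: (i) for any state $\sigma$ and effect $F$, the map $\If_\sigma(F)$ is linear on $\Af(M)^*$, and $\If_\sigma(F)(\nu)(\1)=\nu(\varepsilon_\sigma(F))$, so in particular $\If_\sigma(F)(\omega)(\1)=\omega(\varepsilon_\sigma(F))=\Pb_\sigma(F;\omega)$ for a state $\omega$; and (ii) by the tensor-of-effects and tensor-of-states exercises, $E_1\otimes E_2\in\Eff((\Bf_1\otimes\Bf_2)(M))$ and $\sigma_1\otimes\sigma_2$ is a state on $(\Bf_1\otimes\Bf_2)(M)$, so that $\varepsilon_{\sigma_1\otimes\sigma_2}(E_1\otimes E_2)$ and $\Pb_{\sigma_1\otimes\sigma_2}(E_1\&E_2;\omega)=\omega(\varepsilon_{\sigma_1\otimes\sigma_2}(E_1\otimes E_2))$ are meaningful.

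First I would invoke the composition theorem (whose hypothesis is exactly $K_1\triangleleft K_2$ together with causal factorisation): $\If_{\sigma_1\otimes\sigma_2}(E_1\otimes E_2)=\If_{\sigma_2}(E_2)\circ\If_{\sigma_1}(E_1)$ as operators on $\Af(M)^*$. Applying both sides to $\omega$ and then to $\1$, the left side becomes $\omega(\varepsilon_{\sigma_1\otimes\sigma_2}(E_1\otimes E_2))=\Pb_{\sigma_1\otimes\sigma_2}(E_1\&E_2;\omega)$. For the right side I would set $\mu:=\If_{\sigma_1}(E_1)(\omega)$. This is a \emph{positive} functional on $\Af(M)$ --- exactly the positivity chain used in Section~3 (for positive $A$, write $E_1=\sum_\alpha W_\alpha^*W_\alpha$ to see $A\otimes E_1\ge 0$, apply the $*$-isomorphism $\Theta_1$, then the positive map $\eta_{\sigma_1}$, then the state $\omega$) --- and $\mu(\1)=\omega(\varepsilon_{\sigma_1}(E_1))=\Pb_{\sigma_1}(E_1;\omega)$.

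If $\mu(\1)=0$ then $\mu=0$ by Cauchy--Schwarz, so the right side vanishes, as does the claimed value $\Pb_{\sigma_2}(E_2;\omega')\Pb_{\sigma_1}(E_1;\omega)$ (the second factor is zero, and $\omega'$ is then simply left undefined). Otherwise I write $\mu=\mu(\1)\,\omega'$ with $\omega'$ the updated state of Section~3, and then, using linearity of $\If_{\sigma_2}(E_2)$ in the functional argument together with (i),
\[
\If_{\sigma_2}(E_2)(\mu)(\1)=\mu(\varepsilon_{\sigma_2}(E_2))=\mu(\1)\,\omega'(\varepsilon_{\sigma_2}(E_2))=\Pb_{\sigma_1}(E_1;\omega)\,\Pb_{\sigma_2}(E_2;\omega').
\]
Equating the two evaluations gives the first displayed identity, and the second is the same statement after rewriting each probability via $\Pb_\sigma(E;\nu)=\nu(\varepsilon_\sigma(E))$.

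I do not expect a real obstacle here: all the substantive work sits in the composition theorem, and what remains is bookkeeping of the scalar $\mu(\1)$ and of the degenerate case. The only point needing a moment's care is checking that $\If_{\sigma_1}(E_1)(\omega)$ is a genuine (unnormalised) positive functional --- which is precisely the observation that makes $\omega'$ a state --- so that dividing by $\mu(\1)$ is legitimate and the factorisation into probabilities is honest.
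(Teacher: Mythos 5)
Your proposal is correct and follows essentially the same route as the paper: apply the composition theorem, evaluate at $\1$ using $\If_{\sigma}(F)(\nu)(\1)=\nu(\varepsilon_\sigma(F))$, and substitute $\If_{\sigma_1}(E_1)(\omega)=\omega(\varepsilon_{\sigma_1}(E_1))\,\omega'$. Your explicit handling of the degenerate case $\omega(\varepsilon_{\sigma_1}(E_1))=0$ is a small but welcome addition that the paper leaves implicit.
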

\begin{proof}
	Another calculation:
	\begin{align*}
	\omega(\varepsilon_{\sigma_1\otimes\sigma_2}(E_1\otimes E_2)) &= 
	\If_{\sigma_1\otimes\sigma_2}(E_1\otimes E_2)(\omega)(\1)
	=\If_{\sigma_2}(E_2)(\If_{\sigma_1}(E_1)(\omega))(\1)\\
	&= (\If_{\sigma_1}(E_1)(\omega))(\varepsilon_{\sigma_2}(E_2))\\
	&= \omega'(\varepsilon_{\sigma_2}(E_2)) \omega(\varepsilon_{\sigma_1}(E_1)),
	\end{align*}
	recalling that $\If_{\sigma_1}(E_1)(\omega)=\omega(\varepsilon_{\sigma_1}(E_1))\omega'$.
\end{proof} 
Iterating this result, we see that state updates can be made sequentially from the past to the future.
\begin{cor}
	For a causally ordered set of coupling zones $K_1\triangleleft K_2\triangleleft\cdots\triangleleft K_n$, and assuming causal factorisation,
	\[
	\Pb_{\sigma_1\otimes\cdots\otimes\sigma_n}(E_1\&\cdots \& E_n) = 
	\Pb_{\sigma_n}(E_n;\omega^{(n-1)}) \cdots \Pb_{\sigma_2}(E_2;\omega^{(1)})\Pb_{\sigma_1}(E_1;\omega)
	\]
	for $E_j\in\Eff(\Bf_j(M))$, where $\omega^{(r)}$ is the 
	updated state for a successful selective test of $E_{r}$ in state $\omega^{(r-1)}$ for $r\ge 1$ with $\omega^{(0)}=\omega$.
	The LHS can be computed in any compatible causal order with the same result.
\end{cor}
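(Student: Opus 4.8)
The plan is to upgrade the phrase ``iterating this result'' into an induction on $n$ carried out at the level of pre-instruments. I would first prove by induction on $n$ the \emph{composition identity}
\[
\If_{\sigma_1\otimes\cdots\otimes\sigma_n}(E_1\otimes\cdots\otimes E_n) \;=\; \If_{\sigma_n}(E_n)\circ\cdots\circ\If_{\sigma_1}(E_1),
\]
and then extract the probability formula from it. The case $n=1$ is the definition of $\If_{\sigma_1}(E_1)$ and $n=2$ is the Composition-of-pre-instruments theorem above.

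For the inductive step, abbreviate $\Bf_\bullet:=\Bf_1\otimes\cdots\otimes\Bf_{n-1}$, $\sigma_\bullet:=\sigma_1\otimes\cdots\otimes\sigma_{n-1}$, $E_\bullet:=E_1\otimes\cdots\otimes E_{n-1}$, $K_\bullet:=K_1\cup\cdots\cup K_{n-1}$. Since the collection is causally ordered there is a Cauchy surface with $K_1,\dots,K_{n-1}$ strictly to its past and $K_n$ strictly to its future, so $K_\bullet\triangleleft K_n$ and $K_\bullet$ is again temporally compact. Treating $\Bf_\bullet$, with state $\sigma_\bullet$, as a single probe coupled to $\Af$ in $K_\bullet$: by the multi-probe form of causal factorisation the merged theory $\Cf_\bullet$ exists, and its scattering map factors through $\Theta_1,\dots,\Theta_{n-1}$ compatibly with the further merge of $\Cf_\bullet$ and $\Cf_n$ into $\Cf$. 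The two-probe Composition theorem then applies to the pair $(K_\bullet,K_n)$ and gives $\If_{\sigma_\bullet\otimes\sigma_n}(E_\bullet\otimes E_n)=\If_{\sigma_n}(E_n)\circ\If_{\sigma_\bullet}(E_\bullet)$; identifying the left-hand side with $\If_{\sigma_1\otimes\cdots\otimes\sigma_n}(E_1\otimes\cdots\otimes E_n)$ (associativity of the merge and of $\otimes$) and substituting the induction hypothesis for $\If_{\sigma_\bullet}(E_\bullet)$ closes the induction.

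Granting the composition identity, the probability formula follows by the telescoping already performed in the $n=2$ proof. Apply both sides to $\omega$ and evaluate at $\1$. On the right, use $\If_{\sigma_r}(E_r)(\mu)(\1)=\mu(\varepsilon_{\sigma_r}(E_r))$ for any positive functional $\mu$, together with $\If_{\sigma_r}(E_r)(\mu)=\mu(\varepsilon_{\sigma_r}(E_r))\,\mu'$ where $\mu'$ is, by definition of the state update, the normalised update of $\mu$ through $E_r$. Peeling from the outermost $\If_{\sigma_n}(E_n)$ inwards turns $\bigl(\If_{\sigma_n}(E_n)\circ\cdots\circ\If_{\sigma_1}(E_1)\bigr)(\omega)(\1)$ into $\Pb_{\sigma_n}(E_n;\omega^{(n-1)})\cdots\Pb_{\sigma_1}(E_1;\omega)$ with $\omega^{(r)}$ exactly the iterated normalised updates of the statement ($\omega^{(0)}=\omega$). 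Since the left-hand side equals $\omega(\varepsilon_{\sigma_1\otimes\cdots\otimes\sigma_n}(E_1\otimes\cdots\otimes E_n))=\Pb_{\sigma_1\otimes\cdots\otimes\sigma_n}(E_1\&\cdots\&E_n)$, the identity is established.

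The algebra in both halves is routine; the real work, and the main obstacle, is the structural and geometric bookkeeping behind the induction. One must know (i) that a causally ordered family is separated by a nested family of Cauchy surfaces, so that every initial union $K_1\cup\cdots\cup K_r$ is temporally compact and stands in the relation $\triangleleft$ to $K_{r+1}$; and (ii) that the merged coupled combinations of $\Af$ with $\Bf_1\otimes\cdots\otimes\Bf_r$ exist and that causal factorisation holds in its multi-probe form \emph{associatively}, so that splitting the last probe off at each stage is legitimate. Both are part of, or straightforward consequences of, the standing hypotheses and the conventions on causal structure that these notes take for granted. Finally, order-independence is immediate: the left-hand side equals $(\omega\otimes\sigma_1\otimes\cdots\otimes\sigma_n)(\Theta(E_1\otimes\cdots\otimes E_n))$, a single number that refers to no ordering, so each compatible causal order merely re-expresses it as a (generally different-looking) telescoping product of the same value; equivalently, two compatible orders differ by transpositions of adjacent causally disjoint zones, under which the composed pre-instruments commute by the Corollary on causally disjoint couplings.
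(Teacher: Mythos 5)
Your proof is correct and is essentially the paper's own argument: the paper's entire proof of this corollary is the preceding sentence ``Iterating this result\ldots'', and your induction --- grouping the first $n-1$ probes into a single merged probe with coupling zone $K_1\cup\cdots\cup K_{n-1}$ and applying the two-probe composition theorem, then telescoping at $\1$ --- is exactly how that iteration is meant to be carried out, with the order-independence claim settled by your (correct) observation that the left-hand side is the single number $(\omega\otimes\sigma_1\otimes\cdots\otimes\sigma_n)(\Theta(\1\otimes E_1\otimes\cdots\otimes E_n))$. The one point worth flagging is that $\triangleleft$ is not transitive, so your step $K_1\cup\cdots\cup K_{n-1}\triangleleft K_n$ requires ``causally ordered'' to mean $K_i\triangleleft K_j$ for all $i<j$ (as in the source paper \cite{FewVer_QFLM:2018}), not merely for consecutive indices as the chain notation might suggest.
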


Recalling that failure of $E_j$ is represented by the effect $\neg E_j=\1-E_j$, one can apply this result to any history of outcomes for the tests. Finally, the formalism extends to multi-outcome {\bf effect valued measures} -- see~\cite{FewVer_QFLM:2018} and~\cite{MandryschNavascues:2024}.

\clearpage
\section{Impossible measurements}

We come to a now well-known example in measurement theory: Sorkin's `impossible measurement' problem~\cite{sorkin1993impossible} -- see the recent survey~\cite{PapageorgiouFraser:2023b}. Consider three regions, $O_A$, $O_B$ and $O_C$, corresponding to the labs of Alice, Bob and Charlie and with causal relations shown in the figure. Specifically: $O_C$ does not intersect $O_B$'s past, and neither does $O_B$ intersect $O_A$'s past; there are nontrivial intersections $J^+(O_A)\cap O_B$ and
$J^+(O_B)\cap O_C$, but $O_C$ and $O_A$ are causally disjoint.

Sorkin used essentially this arrangement to illustrate the pitfalls in trying to straightforwardly adapt rules of quantum measurement from quantum mechanics. The idea is Alice will choose whether to make a nonselective measurement, while Bob will certainly make a nonselective measurement. Can Charlie tell whether Alice has measured, which would amount to a form of superluminal communication? In the rules Sorkin suggested, an apparently typical measurement made by Bob allows Charlie to determine whether Alice has measured. The response is that Bob's measurement cannot be possible -- it is an {\bf impossible measurement}. Sorkin then writes:
\begin{quote}
``[I]t becomes a priori unclear, for quantum field theory, which observables can be
measured consistently with causality and which can't. This would seem to deprive [QFT] of any definite measurement theory,
	leaving the issue of what can actually be measured to (at best) a case-by-case analysis''
\end{quote}
In fact he then argues for a consistent histories interpretation. What we will do here, following~\cite{BostelmannFewsterRuep:2020}, is to show that the problem of impossible measurements is resolved in the measurement framework described so far. Let us suppose that Alice and Bob make their measurements by coupling to probes with coupling zones $K_A\subset O_A$ and $K_B\subset O_B$ respectively.

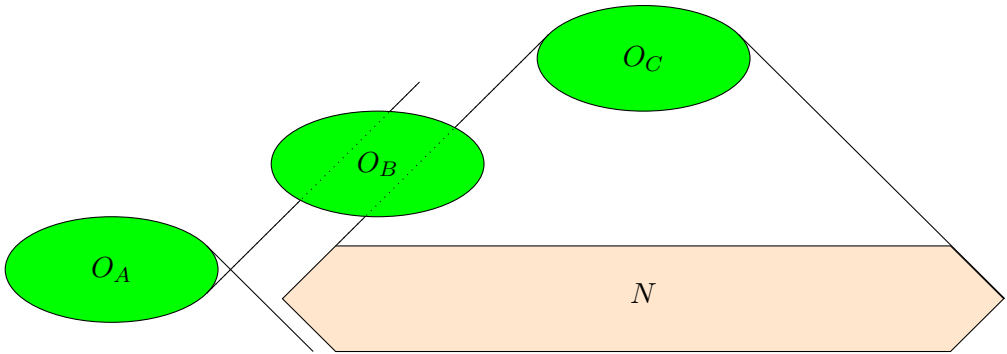
\begin{figure}[b]
	\begin{center}
		\begin{tikzpicture}[scale=0.7] 
			\draw[fill=green] (0,0) node{$O_A$} ellipse (2cm and 1cm);
			\draw(2-0.211,-0.447)--++(4,4);
			\draw(8.211,4.447)--++(-4,-4);
			\draw[fill=green] (5,2) node{$O_B$} ellipse (2cm and 1cm);
			\draw[fill=green] (10,4) node{$O_C$} ellipse (2cm and 1cm);
			\draw[dotted](2-0.211,-0.447)--++(4,4);
			\draw[dotted](8.211,4.447)--++(-4,-4); 
			\draw[fill=orange!20] (3.211,0.447-1) --++(1,-1) --++(12-0.442,0) --++(1,1)--++(-1,1)--++(-12+0.442,0)--cycle;
			\node at (10,-0.447){$N$};
			\draw(12-0.211,4.447)--++(5,-5);
			\draw(2-0.211,0.447)--++(2,-2);	
			%		\draw[fill,green!20] (10,0) --++(1.6,1.6) --++(-3.2,0)--cycle;
			%		\draw[->](10,0)--++(1.8,1.8) node[midway,sloped,below]{Light}; 
			%		\draw[->](10,0)--++(-1.8,1.8) node[midway,sloped,below]{Light};
			%		%\draw[->](10,0) --++(2,0) node[midway,below]{Space};
			%		\draw (10,0) ++(0,1.6) node[above]{Future};
		\end{tikzpicture}
	\end{center}
	\caption{$O_A$ and $O_C$ are causally disjoint, though $O_A$ can influence $O_B$, and $O_B$ can influence $O_C$. In a globally hyperbolic spacetime it follows that there is a region $N$ in $B$'s `in' region, containing $O_C$ in its Cauchy development but causally disjoint from $O_A$.}
\end{figure}

What is necessary is to compare $\omega_{AB}(C)$ and $\omega_B(C)$, where $C\in\Af(O_C)$, $\omega_B$ is the nonselective update of Bob's measurement alone, and $\omega_{AB}$ the update when both Alice and Bob make nonselective updates. We have
\[
\omega_B(C) = (\omega\otimes\sigma_B)(\Theta_B (C\otimes\1)) = 
\If_{\sigma_B}(\1)(\omega)(C),
\]
while
\begin{align*}
\omega_{AB}(C) &= \If_{\sigma_B}(\1)(\omega_A)(C) = 
\If_{\sigma_B}(\1)(\If_{\sigma_A}(\1)(\omega))(C) =
\If_{\sigma_A\otimes\sigma_B}(\1\otimes\1)(\omega)(C) \\
& = (\omega\otimes\sigma_A\otimes\sigma_B)((\Theta_A\otimes \id)\circ (\Theta_B\otimes_2\id) (C\otimes\1\otimes\1)).
\end{align*}
A geometric argument~\cite{BostelmannFewsterRuep:2020} shows that there is a region $N$ as shown, so that $N$ lies in Bob's `in' region, i.e., $N\subset M\setminus J^+(K_B)$, and is causally disjoint from Alice's coupling zone $K_A$.
Meanwhile,  $O_C$ lies in Bob's `out' region by assumption and is contained in the Cauchy development of $N$.

\begin{figure}
	\begin{center}
		\begin{tikzpicture}
				\draw[fill=green!20] (-1,0).. controls (0,-0.5).. (1,0).. controls (0,0.5).. (-1,0);  
				\node at (0,0) {$K$};
				\draw[dashed] (-1,0) --++(-2,2);
				\draw[dashed] (-1,0) --++(-2,-2);
				\draw[dashed] (1,0) --++(2,2);
				\draw[dashed] (1,0) --++(2,-2);
				\draw[fill=blue!20] (-1,-1.5)--++(0.5,-0.5)--++(5,0)--++(0.5,0.5)--++(-0.5,0.5)--++(-5,0)--cycle;
				\node at (2,-1.5) {$L^-$};
				\draw[dotted] (-0.5,-1)--++(2.5,2.5)--++(2.5,-2.5);
				\draw[fill=orange!20] (2,1.5)--++(0.5,-0.5)--++(-0.5,-0.5)--++(-0.5,0.5)--cycle;
				\node at (2,1) {$L^+$};
		\end{tikzpicture}
	\end{center}
	\caption{Illustration of the sets appearing in Lemma~\ref{lem:Theta_on_Lplus}.
	$L^-$ lies in the `in' region $M^-=M\setminus J^+(K)$ of coupling zone $K$, while $L^+$ lies in the `out' region $M^-=M\setminus J^+(K)$ and is also in the Cauchy development of $L^-$.\label{fig:Theta_on_Lplus}}
\end{figure}
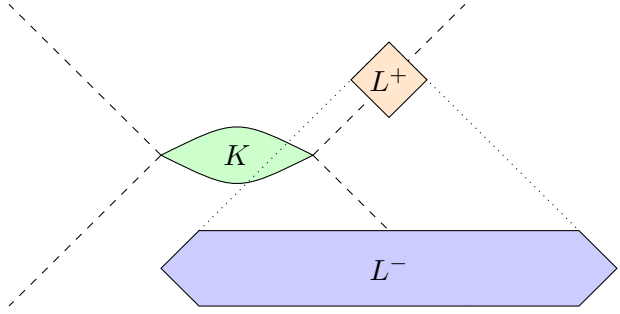

The following is a general result about scattering maps, illustrated in Fig.~\ref{fig:Theta_on_Lplus}.
\begin{lemma}\label{lem:Theta_on_Lplus}
	Suppose, for coupling zone $K$, that $L^\pm\subset M^\pm=M\setminus J^\mp(K)$ are regions with $L^+\subset D(L^-)$, the Cauchy development (aka domain of determinacy) of $L^-$. Then 
	\begin{equation}\tag{$**$}\label{eq:Theta_Lplus}
	\Theta\Uf(M;L^+)\subset \Uf(M;L^-).
	\end{equation}
\end{lemma}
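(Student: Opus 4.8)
The plan is to reduce the statement to the timeslice property and Lemma~\ref{lem:Theta_on_Kperp}-style diagram chasing, exploiting the fact that $L^+$ sits inside the Cauchy development of $L^-$. The key geometric input I would use is that, since $L^+\subset D(L^-)$, one can find an auxiliary region $\widetilde{L}\subset M$ that contains a Cauchy surface of $M$, with $\widetilde{L}\subset L^-$ "extended across $K$" — more precisely, I want a region $\widetilde{L}^-$ containing $L^+$, containing a Cauchy surface of $M$, and lying in $M^-=M\setminus J^-(K)$, such that $\widetilde{L}^-$ and $L^-$ generate the same local algebra. The existence of such a region is where global hyperbolicity and the hypothesis $L^+\subset D(L^-)$ do the real work: because $L^+$ is determined by initial data on a Cauchy surface inside $L^-$, the algebra $\Uf(M;L^+)$ is already contained in $\Uf(M;L^-)$ when both are viewed as subalgebras of $\Uf(M)$ — this is an application of the timeslice property to the region $L^-$ regarded as a globally hyperbolic spacetime in its own right.

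Concretely, the first step is to observe that $L^-$ is itself globally hyperbolic, and that a Cauchy surface $\Sigma$ of $L^-$ has the property that $D(\Sigma)\supset L^+$ within $M$ (by hypothesis). Hence by the timeslice property applied inside $M$, $\Uf(M;L^+)\subset\Uf\big(M;D_M(\Sigma)\big)$, and $D_M(\Sigma)$ can be shrunk back into $L^-$-shaped regions; the upshot is $\Uf(M;L^+)\subset\Uf(M;L^-)$ as subalgebras of $\Uf(M)$. The second step is to identify how $\Theta$ acts. Writing $\Theta=(\tau^-)^{-1}\circ\tau^+$, I would run the same kind of commuting-diagram argument as in Lemma~\ref{lem:Theta_on_Kperp}: since $L^+\subset M^+$, the map $\tau^+$ restricted to $\Uf(M;L^+)$ factors through $\iota_{M^+}$ and the timeslice identification $\Uf(M)=\Uf(M;M^+)$; similarly $(\tau^-)^{-1}$ on the image factors through $\iota_{M^-}^{-1}$. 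The compatibility of the $\iota_O$ with isotony then forces $\Theta$ restricted to $\Uf(M;L^+)$ to land in the image of $\Uf(M;M^-)$, and combined with the first step (which places the relevant elements already in $\Uf(M;L^-)\subset\Uf(M;M^-)$, since $L^-\subset M^-$) one concludes $\Theta\,\Uf(M;L^+)\subset\Uf(M;L^-)$.

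The main obstacle I anticipate is the geometric bookkeeping in the first step: one must be careful that "$L^+\subset D(L^-)$" really licenses pushing $\Uf(M;L^+)$ into $\Uf(M;L^-)$ via timeslice, since timeslice as stated requires a region containing a \emph{full} Cauchy surface of $M$, whereas $L^-$ need not. The fix is to enlarge $L^-$ along a Cauchy surface of $M$ without entering $J^-(K)$ — using that $M^-$ contains Cauchy surfaces of $M$ — and to note that data on $L^-$'s Cauchy surface together with the Green-function support properties G1--G3 already determine the field in $L^+$; this is exactly the content of the timeslice proof in Theorem~\ref{thm:timeslice}, re-run with $L^-$ (suitably enlarged) in place of $O$. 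Everything else is a diagram chase identical in structure to Lemma~\ref{lem:Theta_on_Kperp}. I would also double-check the edge case where $L^+$ meets $J(K)$: the hypothesis $L^+\subset M^+$ is what rules this out on the relevant side, so no separate argument is needed.
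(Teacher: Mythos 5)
There is a genuine gap, and it sits exactly at the point where the two halves of your argument are supposed to join. Your first step establishes $\Uf(M;L^+)\subset\Uf(M;D(L^-))=\Uf(M;L^-)$ by the timeslice property of the \emph{uncoupled} theory. This is true but does not control $\Theta$: knowing that an element $X$ of $\Uf(M;L^+)$ also lies in $\Uf(M;L^-)$ tells you nothing about where $\Theta X$ lies, since $\Theta$ is precisely the mismatch between the $\tau^+$- and $\tau^-$-identifications and has no reason to preserve $\Uf(M;L^-)$. Your second step then only delivers that $\Theta\,\Uf(M;L^+)$ lands in the image of $\Uf(M;M^-)$ --- but $\Uf(M;M^-)=\Uf(M)$ by timeslice, so this is vacuous. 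Combining a vacuous localisation of the output with a localisation of the \emph{input} is a non sequitur; nothing in your argument propagates localisation through the coupled dynamics from the out-region back to the in-region, which is the entire content of the lemma.

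The missing ingredient, which is the heart of the paper's proof, is that the timeslice/Cauchy-development step must be performed in the \emph{coupled} theory $\Cf$: one inserts $\Cf(M;L^+)\subset\Cf(M;D(L^-))=\Cf(M;L^-)$ (isotony plus the timeslice axiom for $\Cf$, using that $L^-$ contains a Cauchy surface of the region $D(L^-)$) into the middle of the diagram. Then $\tau^+|_{\Uf(M;L^+)}$ factors through $\iota_{L^+}$ into $\Cf(M;L^+)\subset\Cf(M;L^-)$, and since $L^-\subset M^-$ the compatibility of the $\iota_O$ with isotony identifies $(\tau^-)^{-1}$ on $\Cf(M;L^-)$ with $\iota_{L^-}^{-1}$ followed by inclusion, which lands in $\Uf(M;L^-)$. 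Note also that the first step of your proposal then becomes unnecessary, and your worry about $\iota_{L^+}$ versus $\iota_{L^-}$ agreeing on a common subalgebra never arises, because there is no region inclusion between $L^+$ and $L^-$ and the two identifications genuinely differ --- that difference is what $\Theta$ measures. The fix is small (move one application of timeslice from $\Uf$ to $\Cf$), but as written the proof does not establish \eqref{eq:Theta_Lplus}.
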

We apply Lemma~\ref{lem:Theta_on_Lplus} with $L^+=O_C$, $L^-=N$, and for Bob's coupling. Then
\[
\Theta_B (C\otimes\1) \in \Uf_B(M;N) \qquad \implies
(\Theta_B\otimes_2 \1 )(C\otimes \1\otimes\1)\in \Uf_{AB}(M;N).
\]
But $\Theta_A\otimes\id$ acts trivially on $\Uf_{AB}(M;N)$ so 
\[
\omega_{AB}(C) = (\omega\otimes\sigma_A\otimes\sigma_B)( (\Theta_B\otimes_2\id) (C\otimes\1\otimes\1)) = \omega_B(C).
\] 
Charlie's results are (in expectation value) the same whether or not Alice measures -- there are no impossible measurements in this framework. 
Put another way, in order to see an impossible measurement, one requires measurement couplings that cannot be modelled using (local) QFTs. \emph{Impossible measurements require impossible apparatus}~\cite{BostelmannFewsterRuep:2020}.

Summarising
\begin{itemize}
	\item there are local algebra elements that can produce superluminal communication if they could be implemented as operations within the coupling zone (Sorkin's argument~\cite{sorkin1993impossible}, and more detailed examples in~\cite{Jubb:2022,EncycMP_Measurement_in_QFT_FewsterVerch2025}).
	\item there are no superluminal communications associated with the state update rules in the measurement framework.
\end{itemize}
It seems reasonable that the lesson of Sorkin's example is that not all local algebra elements can be implemented as local operations. We return later to whether they can be regarded as local observables.

We are left with the task of proving the lemma.
\begin{proof}[Proof of Lemma~\ref{lem:Theta_on_Lplus}]
Consider the following diagram
\[
\begin{tikzcd}
	\Uf(M)\arrow[ddr,"\tau^+",swap] & \Uf(M;L^+)\arrow[l,"\upsilon_{M;L^+}",swap]\arrow[d,"\cong"] & \Uf(M;L^-)\arrow[d,"\cong"]\arrow[r,"\upsilon_{M;L^-}"] & \Uf(M)\\
	& \Cf(M;L^+) \arrow[r]\arrow[d] & \Cf(M;D(L^-))=\Cf(M;L^-)\arrow[dl] & \\
	& \Cf(M)\arrow[bend right=30,uurr,"(\tau^-)^{-1}",swap]
\end{tikzcd}
\]
in which the left half triangle is a simplified version of one
that appeared in the proof of Lemma~\ref{lem:Theta_on_Kperp}, but with 
$L^+$ replacing $K^\perp$ (valid because $L^+\subset M^+$ \emph{exercise -- check!}). The middle triangle commutes due to transitivity of inclusion, and the equality
$\Cf(M;D(L^-)) =\Cf(M;L^-)$ follows from the timeslice property. The remaining portion is equivalent to the left half triangle but for $L^-\subset M^-$, and also commutes. The inclusions of $\Uf(M;L^\pm)$ in $\Uf(M)$ have been labelled. As the diagram commutes in full,
\[
\Theta\circ \upsilon_{M;L^+} = \upsilon_{M;L^-} \circ \zeta,
\]
where $\zeta$ is obtained by following the uppermost path from 
$\Uf(M;L^+)$ to $\Uf(M;L^-)$ in the diagram. But what this means
is that the range of $\Theta$ restricted to $\Uf(M;L^+)$ is in the image of the inclusion of $\Uf(M;L^-)$ in $\Uf(M)$. In other words,~\eqref{eq:Theta_Lplus} is proved.
\end{proof}

\clearpage
\section{Examples}\label{sect:examples}

We return to the models discussed in the first lecture, but taking a broader viewpoint. 

\paragraph{RFHGHOs} (See., e.g.~\cite{Fewster:2025a}.)
Let $M$ be a globally hyperbolic spacetime.
Let $B$ be a finite rank complex bundle over $M$, equipped with a (possibly indefinite) hermitian fibre metric $(\cdot,\cdot)$. Those less familiar with bundles can think of the example $B=M\times\CC^k$ with the usual complex inner product on $\CC^k$. By $C^\infty(B)$ we mean the smooth sections of $B$ [or smooth functions $C^\infty(M;\CC^k)$ in the simple example].
Suppose also that $B$ has a complex conjugation $\Cc$, i.e., an antilinear involution compatible with the fibre metric in the sense that
\[
(\Cc f,\Cc h) = \overline{(f,h)}.
\]
A partial differential operator $P$ on $C^\infty(B)$ is {\bf real} if $P\Cc = \Cc P$, and {\bf formally hermitian} if 
\[
\int_M (f,Ph)\,\dvol = \int_M (Pf,h)\,\dvol
\]
for all $f,h\in C^\infty(B)$ with compactly intersecting supports.

If a real, formally hermitian operator $P$ also has advanced and retarded Green operators $E^\pm_P$ obeying
\begin{enumerate}[G1]
	\item $E_P^\pm Pf=f$
	\item $PE_P^\pm f=f$ 
	\item $\supp E_P^\pm f\subset J^\pm(\supp f)$ 
\end{enumerate}
for all $f\in C_0^\infty(B)$ then $P$ is a {\bf real formally hermitian Green hyperbolic operator} (RFHGHO).
The free Klein--Gordon field is a RFHGHO (\emph{exercise!}), as are multicomponent generalisations thereof, and other models, including the Proca field.

\paragraph{Quantisation} If $P$ is a RFHGHO on $B$, then let $\Af_P(M)$ be
the unital $*$-algebra generated by symbols $\Phi_P(f)$ ($f\in C_0^\infty(B)$), subject to relations
\begin{enumerate}[Q1]
	\item $f\mapsto \Phi_P(f)$ is complex-linear
	\item $\Phi_P(f)^*=\Phi_P(\Cc f)$
	\item $\Phi_P(Pf)=0$
	\item $[\Phi_P(f),\Phi_P(h)]=\ii E_P(f,h)\1$
\end{enumerate}
for all $f,h\in C_0^\infty(B)$, where
\[
E(f,h) = \int_M (\Cc f, Eh)\dvol.
\]
Local algebras $\Af_P(M;O)$ are formed in the same way as before; the Haag property is proved by an argument in~\cite{FewVer_QFLM:2018}.

If $P$ and $Q$ are RFHGHOs on bundles $B_P$ and $B_Q$ then 
$P\oplus Q$ is a RFHGHO on $B_P\oplus B_Q$ (fibrewise direct sum, with the direct sum complex conjugation and fibre metric) and
\[
E_{P\oplus Q}^\pm  = \begin{pmatrix} E_P^\pm & 0 \\ 0 & E_Q^\pm\end{pmatrix}.
\]
{\bf Fact:} $\Af_{P\oplus Q}(M;O) = \Af_P(M;O)\otimes \Af_Q(M;O)$ under the identification
\[
\Phi_{P\oplus Q}(f\oplus h) = \Phi_P(f)\otimes\1 + \1\otimes \Phi_Q(h).
\]
In other words, the uncoupled combination of $\Af_P(M)$ and $\Af_Q(M)$ is
just $\Af_{P\oplus Q}(M)$.\footnote{The same is true for the Weyl algebra formulation of the scalar field
in terms of $C^*$-algebras. In this case $\otimes$ denotes any complete tensor product, because the Weyl algebra is nuclear.}
\begin{thm}
$K$ be any temporally compact set, and
$T$ be any RFHGHO on $B_P\oplus B_Q$ such that $T=P\oplus Q$ outside $K$. Then $\Af_T(M)$ is a coupled combination of $\Af_P(M)$ and $\Af_Q(M)$.
Moreover, the scattering map is given explicitly by 
\[
\Theta \Phi_{P\oplus Q}(f\oplus h) = \Phi_{P\oplus Q}(\vartheta (f\oplus h)),
\]
where
\begin{equation}\label{eq:Theta} 
\vartheta\begin{pmatrix} f \\ h\end{pmatrix}:= 
\begin{pmatrix}
	f\\ h
\end{pmatrix} 
- (T-P\oplus Q)E_T^{-}
\begin{pmatrix}
	f\\ h
\end{pmatrix} .
\end{equation}
\end{thm}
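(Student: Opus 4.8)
Write $S := P \oplus Q$ and $D := T - S$, so that $D$ is a partial differential operator on $C^\infty(B_P \oplus B_Q)$ whose coefficients are supported in $K$; by the Fact quoted above, $\Uf(M;O) = \Af_S(M;O)$ and $\Phi_{P\oplus Q}(f\oplus h) = \Phi_S(f \oplus h)$. The proof splits into (1) showing $\Af_T(M)$ is a coupled combination with coupling zone $K$, and (2) identifying the associated scattering map with the stated $\vartheta$.

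For (1), I would define, for each region $O$ disjoint from $\ch(K)$, the map $\iota_O \colon \Uf(M;O) \to \Af_T(M;O)$ by $\iota_O(\Phi_S(F)) = \Phi_T(F)$ for $F$ supported in $O$, and verify it is a well-defined, unit- and $*$-preserving isomorphism compatible with isotony. The point is that $O$ is itself globally hyperbolic (it is a region) and $S$ and $T$ restrict to the \emph{same} operator there, because $O \cap K \subseteq O \cap \ch(K) = \emptyset$; using causal convexity of $O$ (so that causal hulls of compact subsets of $O$ stay in $O$) one identifies the kernel of $F \mapsto \Phi_S(F)$ on functions supported in $O$ with $SC_0^\infty(B|_O) = TC_0^\infty(B|_O)$, and the commutator pairings $E_S$, $E_T$ agree on such functions. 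Compatibility with isotony is immediate since all the $\iota_O$ restrict from one rule on generators. This part is essentially bookkeeping once the relevant facts about Green operators on causally convex subsets are in hand.

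For (2), I would first observe that $\vartheta F = F - D E_T^- F$ equals $S E_T^- F$, since $D E_T^- = T E_T^- - S E_T^- = \id - S E_T^-$ on $C_0^\infty(B)$, and that $\vartheta F$ has compact support contained in $\supp F \cup K$ (compact, as $K$ is temporally compact). Next, unwind $\Theta = (\tau^-)^{-1} \circ \tau^+$: for $F$ supported in the out region $M^+ = M \setminus J^-(K)$ one has $\tau^+ \Phi_S(F) = \Phi_T(F)$ directly from the definition of $\iota_{M^+}$, and $(\tau^-)^{-1}\Phi_T(F) = \Phi_S(G)$ for any $G$ supported in the in region $M^- = M \setminus J^+(K)$ with $E_T G = E_T F$, such a $G$ being produced by the cutoff construction from the timeslice proof. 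The heart of the matter is to show $\Phi_S(\vartheta F) = \Phi_S(G)$, i.e.\ $\vartheta F \equiv G \bmod SC_0^\infty(B)$; for this I would compute $E_S(\vartheta F)$ using the two resolvent identities $E_T^\pm = E_S^\pm - E_S^\pm D E_T^\pm = E_S^\pm - E_T^\pm D E_S^\pm$ together with the support axioms G3 for $S$ and $T$ and the fact that $D E_T^- F$ is supported in $K \cap J^-(\supp F)$, so that $\vartheta F$ carries the same advanced data as $F$ to the past of $K$. Finally, the passage from $F$ supported in $M^+$ to arbitrary $F$ is handled by moving $\supp F$ into $M^+$ via the timeslice property on the $\Uf$ side and checking the reduction is consistent with $\vartheta$ modulo $SC_0^\infty(B)$.

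The main obstacle will be step (2): showing that the concrete operator $\vartheta$ really computes the abstract morphism $\Theta$ requires threading the support and uniqueness properties of advanced and retarded Green operators through the chains of timeslice isomorphisms, and in particular checking that the right-hand side of the stated formula respects the equation-of-motion relation (equivalently, that $\vartheta$ induces a well-defined automorphism of $C_0^\infty(B)/SC_0^\infty(B)$) — the one place where the causal geometry of $J^\pm(K)$ genuinely enters. By contrast, once (2) is established, the complete positivity, unitality and $*$-preservation of $\varepsilon_\sigma$ stated earlier follow automatically from the general theorems applied to this $\Theta$, so nothing further is needed there.
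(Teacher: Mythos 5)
Your overall architecture is reasonable and, so far as one can tell (the paper itself defers the proof to the Fewster--Verch paper, so there is no in-text argument to compare against), the first part and the out-region case of the second part are handled correctly. In particular, writing $S=P\oplus Q$ and $D=T-S$, your identification $\vartheta F = SE_T^-F$, the two resolvent identities, and the computation of $E_S(\vartheta F)$ are exactly the right tools: for $F$ supported in $M^+$ one finds $E_S^-\vartheta F = E_T^-F$ (note this is the advanced data of $F$ with respect to $T$, not with respect to $S$, so your phrase ``carries the same advanced data as $F$'' is slightly off) and hence $E_S\vartheta F = E_TF + E_S^+DE_TF$, which agrees with $E_S G$ for the cutoff representative $G\equiv F \bmod TC_0^\infty(B)$ supported in $M^-$. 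That establishes the formula on $C_0^\infty(B|_{M^+})$.

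The genuine gap is your final step. The map $\vartheta$ does \emph{not} descend to $C_0^\infty(B)/SC_0^\infty(B)$, and the displayed formula cannot hold for arbitrary $f\oplus h$. Cleanest way to see it: if $F$ is supported in $M^-=M\setminus J^+(K)$, then $E_T^-F$ is supported in $J^-(\supp F)$, which is disjoint from $K\supset \supp(T-S)$, so $\vartheta F = F$; since by timeslice every generator has a representative supported in $M^-$, validity of the formula for all test functions would force $\Theta=\id$, which is false in general (it would make every induced observable trivial, $\varepsilon_\sigma(B)=\sigma(B)\1$). Equivalently, $\vartheta(SH) = SH - (DH - DE_T^-DH)$ and the bracketed term is not in $SC_0^\infty(B)$ in general, so the consistency check you postpone to the end is not a technical hurdle but an impossibility. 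The correct reading of the theorem is that the formula computes $\Theta$ on test functions supported in the out region $M^+$ (the physically relevant case, since probe observables are measured after the coupling); for general $F$ one must first pass to an out-region representative $F^+$ and then apply $\vartheta$, i.e.\ $\Theta\Phi_{P\oplus Q}(F)=\Phi_{P\oplus Q}(\vartheta F^+)$. You should make this restriction explicit; with it, your $M^+$ argument already constitutes the whole proof of the second assertion.
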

\begin{proof}
	See~\cite{FewVer_QFLM:2018}.
\end{proof}
Note that the second term in~\eqref{eq:Theta} is supported in the intersection of $K$ with the causal past of the support of $f\oplus h$, and is hence compactly supported (within $K$).  

Now suppose $\Af_P$ is the system and and $\Af_Q$ is the probe theory. 
Consider a probe observable $\Phi_Q(h)$ for some $h\in C_0^\infty(B_Q)$, corresponding to $\Phi_{P\oplus Q}(0\oplus h)$ in the uncoupled combination. Write
\[
\begin{pmatrix}
	f^- \\ h^- 
\end{pmatrix} = \vartheta \begin{pmatrix}
0 \\ h
\end{pmatrix}.
\]
Then we can organise the action of $\Theta$ on powers of $\Phi_{P\oplus Q}(0\oplus h)$ using a generating function
\begin{align*}
\Theta \exp (\ii \Phi_{P\oplus Q}(0\oplus h)) &= 
 \exp (\ii \Theta \Phi_{P\oplus Q}(0\oplus h)) = 
  \exp (\ii \Phi_{P\oplus Q}(f^-\oplus h^-)) \\ &= 
  \exp(\ii \Phi_P(f^-))\otimes  \exp(\ii \Phi_Q(h^-)).
\end{align*}
These expressions are understood as equalities of formal power series in $h$, where both $f^-$ and $h^-$ are $O(h)$. 

Now let $\sigma$ be a probe preparation state for $\Af_Q(M)$. Then
\[
\varepsilon_\sigma( \exp(\ii \Phi_Q(h)) = \eta_\sigma \left(
\exp(\ii \Phi_P(f^-))\otimes  \exp(\ii \Phi_Q(h^-)) \right)= 
\sigma(\exp(\ii \Phi_Q(h^-))) \exp(\ii \Phi_P(f^-)),
\]
which allows us to find $\varepsilon_\sigma(\Phi_Q(h)^k)$ for any $k$ by expanding to find the term of $O(h^k)$,
\begin{align*}
	\varepsilon_\sigma(\1) &= \1 \\
	\varepsilon_\sigma(\Phi_Q(h)) &= \Phi_P(f^-) + \sigma(\Phi_Q(h^-))\1 \\
	\varepsilon_\sigma(\Phi_Q(h)^2) &= \Phi_P(f^-)^2 +  
	2\sigma(\Phi_Q(h^-))\Phi_P(f^-) + 
	\sigma(\Phi_Q(h^-)^2)\1,
\end{align*}
and so on. The impact of the preparation state is seen clearly,
as is the fact that $\varepsilon_\sigma$ is not a homomorphism.

Another consequence is that
\[
\omega(\varepsilon_\sigma( \exp(\ii \Phi_Q(h)) )
 = 
\omega(\exp(\ii \Phi_P(f^-))) \sigma(\exp(\ii \Phi_Q(h^-))) .
\]
Suppose $\omega$ is regular enough that  
\[
\lambda\mapsto \omega(\exp(\ii \lambda \Phi_P(f))) = 
\int \ee^{\ii\lambda \varphi}\dd\mu_f(\varphi) = \widehat{\mu}_{f}(\lambda)
\]
for a unique probability measure $\mu_f$, in which case $\mu_f$ is the
distribution of outcomes of individual measurements of $\Phi_P(f)$ in state $\omega$. Suppose $\sigma$ has a similar property, with measure $\nu_h$. Then
\[
\lambda \mapsto \omega(\varepsilon_\sigma( \exp(\ii \Phi_Q(h)) )= 
\omega(\exp(\ii \lambda \Phi_P(f^-)))\sigma(\exp(\ii\lambda \Phi_Q(h^-)))
= \widehat{\mu}_{f^-}(\lambda)\widehat{\nu}_{h^-}(\lambda)
= \widehat{\rho}(\lambda),
\]
where 
\[\rho=\mu_{f^-}\star\nu_{h^-}
\] 
is the distribution of the sum of the two independent random variables describing the distributions 
of $\Phi_P(f^-)$ and $\Phi_Q(h^-)$ in states $\omega$ and $\sigma$. 
In other words, the measurement scheme introduces {\bf detector noise}. 

\paragraph{Weak coupling} Now suppose that $T=P\oplus Q+\lambda R$ for some coupling constant $\lambda$. 
We have
\[
(P\oplus Q+\lambda R)E_{T}^-f = f,
\]
and hence
\[
(P\oplus Q)E_{T}^-f = f- \lambda R E_T^- f.
\]
Considering supports, and noting that $RE_T^-f$ is compactly supported,
\[
E_{T}^-f = E_{P\oplus Q}^- f - \lambda E_{P\oplus Q}^-R E_T^- f = 
E_{P\oplus Q}^- f - \lambda E_{P\oplus Q}^-R E_{P\oplus Q}^- f +O(\lambda^2).
\]
This is the start of the {\bf Born expansion}. We see that
\[
\begin{pmatrix}
	f^- \\ h^- 
\end{pmatrix} =  \begin{pmatrix}
	0 \\ h
\end{pmatrix} - \lambda R E_{P\oplus Q}^- \begin{pmatrix}
0 \\ h
\end{pmatrix} + O(\lambda^2) =  \begin{pmatrix}
0 \\ h
\end{pmatrix} - \lambda R  \begin{pmatrix}
0 \\ E_Q^-h
\end{pmatrix} + O(\lambda^2).
\]
In particular, if we replace $h$ by $h/\lambda$ and take $\lambda \to 0$, we find
\[
f^- \to f_0:=-R_{PQ} E_Q^-h ,
\]
where $R_{PQ}$ is the $12$ component in the block decomposition of $R$ on $C^\infty(B_P)\oplus C^\infty(B_Q)$. 

By carefully designing $R$ and $h$, we may obtain a desired $f_0\in C_0^\infty(B_P)$ in the limit. 
This is an {\bf asymptotic measurement scheme} for $\Phi_{P}(f_0)$~\cite{FewsterJubbRuep:2022}.
The method can be extended to any algebra element in $\Af_P(M)$, demonstrating that the
measurement framework is comprehensive, as well as causal and consistent. 
Indeed, all elements of $\Af(M;O)$ can be measured (at least asymptotically)
by a measurement scheme with coupling zone $K\subset O$.

\clearpage
\section{Some remarks on interpretation}

At the very least, what has been described in these notes is 
\begin{itemize}
	\item a local protocol for measuring local observables of one QFT (the system)
	using an ancilla (probe) theory, with an effective description at system level in terms of induced observables and state update rules which behave well with respect to composition and causality.
\end{itemize}
As such, it provides
\begin{itemize}
	\item a model for the measurement process in QFT. 
	Here one has to take on board the modelling assumptions, such as the modification of the theory in a (temporally) compact coupling region as a proxy for experimental design. In a real experiment one cannot change the laws of physics, but one certainly is trying to engineer that certain interactions take place preferentially in a specific spacetime region. Another aspect of the modelling is the role of single-use probes, as an idealisation for the way devices are reset (or reset themselves) between measurements. Finally, there remains the issue that measurement of one field is `explained' in terms of measurement of another. 
	Naturally, the probe field can be measured via further probes and the formalism can accommodate that without modification -- see, e.g.~\cite{MandryschNavascues:2024}. This appears to risk an infinite regression, but in actual fact experimenters tell us that they make measurements! What exactly this means is part of the measurement problem, but the measurement chain is finite, albeit possibly long.
\end{itemize}
Viewed in either of the guises above, the framework provides
\begin{itemize}
	\item formal rules for measurement updates in QFT, free of causal pathologies such as `impossible measurements', and 
	\item an operational interpretation for AQFT, in which local algebras $\Af(M;O)$ are best understood as algebras whose self-adjoint elements can (at least approximately) be measured by local means within $O$, rather than as algebras whose unitary elements represent state-update operations that can be performed by local means in $O$ (though of course such operations would be expected to be elements of $\Af(M;O)$). This can facilitate operational treatments of other issues, as for example in~\cite{Ruep:2021,FewsterJanssenLoveridgeRejznerWaldron:2024}.
\end{itemize}

In its current form, the measurement framework does not provide a full resolution of the measurement problem(s) in QFT, however defined, though it can be hoped that it will help to clarify discussions in that direction. In particular, QFT in curved spacetimes may be a better setting to discuss the quantum measurement problem than nonrelativistic quantum mechanics, in which issues such as measurement at spacelike separation are obscured and have to be reintroduced by hand. Similarly, the curved spacetime theory forces one to abandon the crutch of global inertial coordinates available in Minkowski spacetime. Our discussion of multiple state updates can also inform the interpretation (e.g., realist versus instrumentalist) placed on states and the state update process.

\paragraph{Some further reading} The main sources for these notes are the original papers~\cite{FewVer_QFLM:2018},~\cite{BostelmannFewsterRuep:2020} and~\cite{FewsterJubbRuep:2022} and the details omitted here can be found there, including proofs/references for the properties of globally hyperbolic spacetimes that we have skipped. More recent developments briefly mentioned are in~\cite{Fewster:2025a} and~\cite{MandryschNavascues:2024}. 
For more general reading on AQFT in flat and curved spacetimes, see~\cite{AdvAQFT} and Haag's classic book~\cite{Haag:book} -- a pedagogical introduction is given in~\cite{FewsterRejzner_AQFT:2019}. The existence and properties of Green operators for normally hyperbolic operators in curved spacetimes, including the Klein--Gordon operator, is given in~\cite{BarGinouxPfaffle} (some important conventions differ from those here). For a slightly broader survey of measurement in QFT, see~\cite{EncycMP_Measurement_in_QFT_FewsterVerch2025}, and for historical notes on the development of measurement theory see~\cite{FraserPapageorgiou:2023}. 
For quantum measurement theory in the nonrelativistic context, see~\cite{Busch_etal:quantum_measurement}, which particularly influenced our treatment of state updates.
 
{\small

} 
\end{document}